\newcommand{\R}{\varmathbb{R}}
\newcommand{\RO}{\varmathbb{R}_{0}}
\newcommand{\C}{\varmathbb{C}}
\newcommand{\mrm}[1]{\mathrm{#1}}
\newcommand{\co}{\!\colon\thinspace}
\newcommand{\ess}{\sigma_{\mrm{ess}}}
\newcommand{\disc}{\sigma_{\mrm{disc}}}
\newcommand{\so}{\sigma_{\mrm{so}}}
\newcommand{\Wp}[2]{W^{#1}(#2)^{2}}
\newcommand{\WpO}[2]{W_{0}^{#1}(#2)^{2}}
\newcommand{\CO}[1]{C_{0}^{\infty}(#1)}
\newcommand{\Lp}[1]{L^{2}(#1)^{2}}
\newcommand{\ot}{\text{\large $\otimes$}}
\theoremstyle{plain}
{
\newtheorem{thm}{Theorem}[section]

\newtheorem{lem}[thm]{Lemma}
\newtheorem{prop}[thm]{Proposition}
}
\theoremstyle{definition}
{

}
\theoremstyle{remark} {
\newtheorem{rem}[thm]{Remark}

}
\renewcommand{\thesection}{{\bf\Roman{section}}}
\renewcommand\section{\@startsection {section}{1}{\z@}%
{-3.5ex \@plus -1ex \@minus -.2ex}%
{2.3ex \@plus.2ex}%
{\Large\bfseries}}
\numberwithin{equation}{section}
\renewcommand*{\@cite@ofmt}{\bfseries\hbox}
\begin{document}

\title[Bound states of the spin-orbit coupled ultracold atom]{
Bound states of the spin-orbit coupled ultracold atom in a one-dimensional short-range potential}

\author{Rytis Jur\v{s}\.{e}nas and Julius Ruseckas \\ \\ 
{\tiny Institute of Theoretical Physics and Astronomy \\ of Vilnius University, A. Go\v{s}tauto 12, LT-01108, Vilnius, Lithuania}}

\date{\today}

\begin{abstract}
We solve the bound state problem for the Hamiltonian with the spin-orbit and the Raman coupling included. The Hamiltonian is
perturbed by a one-dimensional short-range potential $V$ which describes the impurity scattering. In addition to the bound states
obtained by considering weak solutions through the Fourier transform or by solving the eigenvalue equation on a suitable domain directly, 
it is shown that ordinary point-interaction representations of $V$ lead to spin-orbit induced extra states.

\noindent{}\textbf{PACS(2010):} 03.65.Ge, 67.85.-d, 71.70.Ej
\end{abstract}


\maketitle

\section{Introduction}\label{sec:intro}

The study of ultracold atomic gases is one of the most actively developed areas of the physics of quantum many-body systems.
Initiated by the pioneering experiments with synthetic gauge fields in both Bose gases \citep{Lin11,Lin09} and Fermi gases \citep{Wang12},  
theoretical physicists took over the research for providing various schemes to synthesize certain extensions to Rashba--Dresselhaus
\citep{Rashba84,Dressel55} spin-orbit coupling for cold atoms \citep{Anderson12,Campbell11,Dalibard11,Juzeliun10}. As a result, one
derives a single-particle Hamiltonian of the form $-\Delta\ot  I+U$, where $\Delta$ is the Laplacian, $I$ is the 
identity operator in $\C^{2}$ (or $\R$), and $U$ is the atom-light coupling containing the spin-orbit interaction of the Rashba or 
Dresselhaus form and the Zeeman field. In a one-dimensional atomic center-of-mass motion, the simplified Hamiltonian of a particle
with mass $1/2$ (in $\hbar=c=1$ units) accedes to a formal differential expression in the configuration space $\R\ot \C^{2}$,

\begin{equation}
H=H_{0}+V(x)\ot  I,\quad H_{0}=-\Delta\ot  I+U,\quad
U=-i\eta\nabla\ot \sigma_{2}+(\Omega/2)\ot \sigma_{3}
\label{eq:Hamilt}
\end{equation}

\noindent{}($x\in\R;\Omega,\eta\geq0;\Delta=d^{2}/dx^{2};\nabla=d/dx$), where $\eta$ labels the spin-orbit-coupling strength, $\Omega$ 
results from the Zeeman field and is named by the Raman-coupling strength; $\sigma_{2}$, $\sigma_{3}$ are the Pauli matrices. In 
(\ref{eq:Hamilt}), $V$ obeys the meaning of a short-range disorder localized in the neighborhood of $x=0$.

It seems to be the first time when the spectral properties---and in particular bound states---of the Hamiltonian realized through 
(\ref{eq:Hamilt}) are considered in detail. For the most part, our attempt to provide the analysis of the spectral characteristics for 
the spin-orbit Hamiltonian is motivated by the work of \cite{Lin11}, where the authors examined the free Hamiltonian $H_{0}$ in 
$\R^{3}\ot \C^{2}$, with $\nabla$ in $x\in\R$, and calculated, particularly, the dispersion relation. In a recent report of 
\cite{Cheuk12} (see also \citep{Galitski13}) such a dispersion was shown to had been measured in $^{6}$Li. 

A straightforward calculation shows that the atom-light coupling $U$ is unitarily equivalent to 
$\eta D_{0}\equiv-i\eta\nabla\ot \sigma_{1}+(\Omega/2)\ot \sigma_{3}$ ($\sigma_{1}$,
$\sigma_{3}$ are the Pauli matrices), and the associated unitary transformation is $I\ot  e^{-i\theta\sigma_{3}}$, where 
$\theta\equiv3\pi/4\!\!\!\mod\pi$. The operator $D_{0}$, provided $\eta>0$, is nothing more than the free one-dimensional Dirac 
operator for the particle with spin one-half and mass $\Omega/(2\eta)$ (in $\hbar=c=1$ units); see \cite{Hughes97,Benvegnu94} for
the analysis of this operator. It turns out that $H$ in (\ref{eq:Hamilt}) can also be interpreted as being equivalent to the (operator) 
sum of the free Dirac operator plus a Schr\"{o}dinger operator $(-\Delta+V)\ot  I$. In particular, this means that, as the 
spin-orbit-coupling strength $\eta$ increases, $H/\eta$ approaches the one-dimensional massless Dirac operator in Weyl's form. For 
arbitrary $\eta>0$, however, one can show that $A_{0}/\eta$, with $A_{0}=U$ defined on a suitable domain (Sec.~\ref{sec:fermi}), is 
unitarily equivalent to $D_{0}+(1/\eta)V_{F}\ot  I$, the one-dimensional Dirac operator for the particle moving in Fermi 
pseudopotential (see (\ref{eq:VF})). This particular feature enables us to show that $H$ admits both continuous and discontinuous 
functions at a zero point. Throughout, by a (dis)continuous function $f$, one accounts for the property whether $f(0_{+})=f(0_{-})\equiv 
f(0)$ (continuity) or not (discontinuity), though $f$ is assumed to be defined on any subset of $\R\backslash\{0\}$.

Originally, one would naturally conjecture that the disorder $V$ is prescribed by a potential well with its minimum at $x=0$. A good 
survey of approximations by smooth potentials can be found, for example, in \citep{Hughes97}. Also, there are numerous works concerning 
the generalized point-interactions in one-dimension; see eg the papers of 
\cite{Ravelo12,Malamud12,Albeverio05,Coutinho04,Coutinho97,Seba86}, and also the 
citations therein. In the present paper, we assume that $V$ is approximated by the square-well of width $2\epsilon$ and depth 
$1/(2\epsilon)$ for some arbitrarily small $\epsilon>0$; the coupling strength of interaction is $\gamma\in\R$. Evidently, this is a 
familiar $\delta$-interaction. The one-dimensional Schr\"{o}dinger and Dirac operators with $\delta$-interaction are known to be 
well-defined via the boundary conditions for everywhere continuous functions. In our case we have a mixture, to some extent, of
Schr\"{o}dinger-like and Dirac-like operators. In Sec.~\ref{sec:extra} we argue that in such a case there is a possibility
that discontinuous eigenfunctions would appear.

To avoid the difficulties concerning the uniqueness of self-adjoint extensions of the operators on intervals $(-\infty,-\epsilon)$, 
$[-\epsilon, \epsilon]$ and $(\epsilon,\infty)$, we consider two distinct representations of $H$ in the Hilbert space $L^{2}(\R)\ot 
\C^{2}$. The first one, denoted $A$, is obtained 
by integrating $H$ in the interval $[-\epsilon,\epsilon]\ni 0$ and then taking the limit $\epsilon\downarrow0$; this gives the required 
boundary condition in defining the domain $D(A)$ of $A$. The second representation of $H$ is a distribution 
$B=H_{0}+\gamma\delta\ot  I$ on $W_{0}^{2}(\R\backslash\{0\})\ot \C^{2}$, with $\delta$ the delta-function. Here and elsewhere, 
$W_{0}^{p}$, with $p=1,2$, is the closure of $C_{0}^{\infty}$ in $W^{p}$, the Sobolev space of functions whose (weak) derivatives of 
order $\leq p$ are in $L^{2}$ \cite[Sec.~3]{Adams03}; we also use the notation $\RO\equiv\R\backslash\{0\}$. By default, we take
into account the isomorphism from $L^{2}(\R)\ot \C^{2}$ to $L^{2}(\R;\C^{2})$ by \citealt[Theorem~II.10]{Reed80}.

To demonstrate that representatives $A$ and $B$ are proper realizations of $H$ we explore the method developed by \cite{Coutinho09}.
As a result, we establish that $[A,A_{0}]=0$ in a strict (classical) sense, and that $[B,B_{0}]=0$ in a weak (distributional) sense.
Here $B_{0}=(U+V_{F}\ot  I)\upharpoonright W_{0}^{1}(\R\backslash\{0\})\ot \C^{2}$. The commutator predetermines a nonempty set of 
common eigenfunctions of $A$ and $A_{0}$, provided $\Omega,\eta>0$ (Theorem~\ref{thm:EA}). The latter inequality shows that extra 
states in $\disc(A)$ can be observed only for nonzero spin-orbit and Raman coupling, and that their appearance in the spectrum is
essentially dependent on the location of the dressed spin states \citep{Lin11} in the dispersion curve.

Although $A$ and $B$ are equivalent representations for providing the spectral characteristics for $H$ in $L^{2}(\R)\ot \C^{2}$, 
we explore both of them. The main reason for such a choice is because the interaction is drawn in $B$ explicitly, and thus one can
easier attach the physical meaning to $B$, rather than $A$; the same applies to $B_{0}$ and $A_{0}$, respectively. On the other hand, 
equivalence classes of functions in $\mrm{ker}(\lambda\ot  I-B)$, with $\lambda\in\disc(B)$, are in a one-to-one
correspondence with functions in $\mrm{ker}(\lambda\ot  I-A)$, with the same $\lambda$, if and only if one imposes certain conditions 
on the normalization constant and the eigenfunction itself (Sec.~\ref{sec:spectrum}). This agrees with \citealt[Sec.~V.4]{Reed80}, which 
in our case says that weak solutions $\mrm{ker}(\lambda\ot  I-B)$ are equal to the classical solutions 
$\mrm{ker}(\lambda\ot  I-A)$ if and only if the classical solutions exist.

The paper is organized as follows. In Sec.~\ref{sec:operators}, we give basic definitions of potential $V$ and the representatives
$A$, $B$, and examine their correctness. Sec.~\ref{sec:fermi} deals mainly with operator $A_{0}$ and its distributional version $B_{0}$. 
As a result, the Fermi pseudopotential $V_{F}$ is introduced. In Sec.~\ref{sec:extra}, we provide spin-orbit induced states for $A$,
as well as compute the essential spectrum. Finally, we compute the remaining part of the discrete spectrum of $A$ ($B$) in 
Sec.~\ref{sec:spectrum}, and summarize the results in Sec.~\ref{sec:summary}.

\section{Preliminaries}\label{sec:operators}

Throughout, we define $\RO\equiv\R\backslash\{0\}$, $\Lp{X}\equiv L^{2}(X)\ot \C^{2}$, $\Wp{p}{X}\equiv W^{p}(X)\ot \C^{2}$
for $p=1,2$, $\CO{X}^{2}\equiv C_{0}^{\infty}(X)\ot \C^{2}$ for some $X\subseteq\R$,
$\Sigma\equiv[-\epsilon,\epsilon]$ for some $\epsilon>0$.

Given function $V$ which is defined as the limit of a sequence of rectangles

\begin{equation}
V(x)=\gamma v(x)\quad(\gamma\in\RO),\quad 
v(x)=\left\{\begin{array}{ll}1/(2\epsilon), & x\in \Sigma, \\ 0, & x\in \R\backslash\Sigma \end{array}\right.\quad\text{as}\quad 
\epsilon\downarrow0.
\label{eq:V}
\end{equation}

\noindent{}Then $v$ is supported in $\Sigma$, and it approaches $\delta$, the delta-function, in the usual sense of distributions,
with the property $\int_{-\infty}^{\infty}v(x)dx=1$. As a matter of fact, $v$ has a wider meaning than $\delta$ in the sense that
\cite[Eq.~(7)]{Coutinho09}

\begin{subequations}\label{eq:V-delta}
\begin{align}
&\int_{-\infty}^{\infty}v(x)f(x)dx=f(0)+\frac{1}{2}\lim_{\epsilon\downarrow0}\sum_{n=1}^{\infty}
\frac{\epsilon^{n}}{(n+1)!}\left(f^{(n)}(0_{+})+(-1)^{n}f^{(n)}(0_{-}) \right), \nonumber \\ 
&f(0_{\pm})\equiv\lim_{\epsilon\downarrow0}f(\pm\epsilon),\quad
f(0)\equiv(f(0_{+})+f(0_{-}))/2\quad 
(f\in C_{0}^{\infty}(\RO)) \label{eq:V-delta-1}
\end{align}

\noindent{}($f^{(n)}$ is the $n$th derivative of $f$ with respect to $x\in\R$ at a given point). As a functional, $v(f)\equiv f(0)$
if and only if $f^{(n)}(\pm\epsilon)\propto\epsilon^{-s(n)}$ for $s(n)<n$ for $n=1,2,\ldots$

In particular, (\ref{eq:V-delta-1}) yields

\begin{equation}
\int_{-\infty}^{\infty}v(x)f(x)dx=f(0)+\lim_{\epsilon\downarrow0}\sum_{n=1}^{\infty}
\frac{\epsilon^{2n}f^{(2n)}(0)}{(2n+1)!}\quad (f\in C_{0}^{\infty}(\R)).
\label{eq:V-delta-2}
\end{equation}
\end{subequations}

\noindent{}Equation~(\ref{eq:V-delta-2}) serves for the criterion in establishing whether the delta-function approximation of 
(\ref{eq:V}) is a proper one. This is done by calculating $f^{(n)}$ at $x=0$ for all $n=0,1,\ldots$, where function $f$ is in the
kernel of the operator that involves $V$ as in (\ref{eq:V}). Afterward, one needs to verify under what circumstances the infinite series
in (\ref{eq:V-delta-2}) converges. For the analysis of specific operator classes, the reader is referred to
\cite{Coutinho09,Griffiths99}. The application of (\ref{eq:V-delta-2}) to $H$ in (\ref{eq:Hamilt}) is examined below.

Let $f\in\mrm{ker}\:H$ in $\Sigma$. The solutions $f(x)\sim e^{kx}$ ($k\in\C;x\in\Sigma$) are found by solving the characteristic equation
for $H$: $\det[(H_{0}+\gamma/(2\epsilon))e^{kx}]=0$ ($\gamma\in\RO$) or explicitly,

$$
k^{4}+(\eta^{2}-\gamma/\epsilon)k^{2}-(\Omega^{2}-\gamma^{2}/\epsilon^{2})/4=0\quad(\eta,\Omega\geq0;\gamma\in\RO;\epsilon>0).
$$

\noindent{}The solutions with respect to $k\in\C$ read

\begin{equation}
k_{ss^{\prime}}=\frac{s^{\prime}}{\sqrt{2}}\left((\gamma/\epsilon)-\eta^{2}+s
\sqrt{\eta^{4}-2\eta^{2}(\gamma/\epsilon)+\Omega^{2}} \right)^{1/2}\quad(s,s^{\prime}=\pm1),
\label{eq:kss}
\end{equation}

\noindent{}and so

$$
k_{ss^{\prime}}\to s^{\prime}k/\sqrt{\epsilon}\quad(k=\sqrt{\gamma/2}\in\C;s^{\prime}=\pm1)
\quad\text{as}\quad\epsilon\downarrow0.
$$

\noindent{}The upper, $f_{1}$, and lower, $f_{2}$, components of $f$ are then of the form

\begin{equation}
f_{1}(x)=\sum_{ss^{\prime}}a_{ss^{\prime}}e^{k_{ss^{\prime}}x},\quad
f_{2}(x)=\sum_{ss^{\prime}}b_{ss^{\prime}}e^{k_{ss^{\prime}}x}\quad(x\in\Sigma)
\label{eq:fss}
\end{equation}

\noindent{}for some $\{a_{ss^{\prime}}\in\C\co s,s^{\prime}=\pm1\}$, $\{b_{ss^{\prime}}\in\C\co s,s^{\prime}=\pm1\}$.
Clearly,

$$
f_{1}(\pm\epsilon)=\sum_{ss^{\prime}}a_{ss^{\prime}}e^{\pm s^{\prime}k\sqrt{\epsilon}}\to\sum_{ss^{\prime}}a_{ss^{\prime}},\quad 
f_{2}(\pm\epsilon)=\sum_{ss^{\prime}}b_{ss^{\prime}}e^{\pm s^{\prime}k\sqrt{\epsilon}}\to\sum_{ss^{\prime}}b_{ss^{\prime}}
$$

\noindent{}as $\epsilon\downarrow0$. Hence $f(0_{+})=f(0_{-})$, $f\in\mrm{ker}\:H$ is continuous at $x=0$.

The $n$th derivative ($n=0,1,\ldots$) of $f$ at $x=0$ is found by differentiating $f(x)\in\CO{\Sigma}^{2}$ $n$ times with respect to $x$
and then setting $x=0$, 

$$
f_{1}^{(n)}(0)=k^{n}\epsilon^{-n/2}\sum_{ss^{\prime}}(s^{\prime})^{n}a_{ss^{\prime}},\quad 
f_{2}^{(n)}(0)=k^{n}\epsilon^{-n/2}\sum_{ss^{\prime}}(s^{\prime})^{n}b_{ss^{\prime}}\quad(\epsilon>0).
$$

\noindent{}As seen, $f^{(n)}(0)\propto\epsilon^{-s(n)}$ with $s(n)=n/2<n$ for $n=1,2,\ldots$. But then 
$\epsilon^{2n}f^{(2n)}(0)\propto\epsilon^{n}\to0$ as $\epsilon\downarrow0$, and the infinite series in (\ref{eq:V-delta-2}) vanishes.
This proves that, as a functional, $v(f)\equiv f(0)$ makes sense for functions in certain domains of $H$. 

As a result, at least two possibilities are valid to construct these domains. The first one is obtained by integrating $Hf$ in $\Sigma$ 
and then taking the limit $\epsilon\downarrow0$. In agreement with (\ref{eq:V-delta-2}) and the discussion above, this gives the operator

\begin{align}
A=H_{0},\quad D(A)=&\Biggl\{f=\left(\begin{matrix}f_{1} \\ f_{2} \end{matrix}\right)\in \Wp{2}{\RO}\co 
\gamma f(0)=f^{\prime}(0_{+})-f^{\prime}(0_{-}) \nonumber \\
&+(i\eta\ot \sigma_{2})(f(0_{+})-f(0_{-})),H_{0}f\in\Lp{\R} \Biggr\} \label{eq:A}
\end{align}

\noindent{}($\gamma\in\RO;\eta\geq0$) where $f(0)$ is of the form in (\ref{eq:V-delta-1}). It appears from (\ref{eq:A}) that for zero 
spin-orbit coupling $\eta=0$, or continuous functions at $x=0$, the boundary condition in $D(A)$ is a familiar relation valid for the 
operators with $\delta$-interaction. This suggests the second realization of $H$ in $\Lp{\R}$, namely,

\begin{equation}
B=(H_{0}+\gamma\delta\ot  I)\upharpoonright\WpO{2}{\RO}\quad(\gamma\in\RO)
\label{eq:B}
\end{equation}

\noindent{}with $\delta$ the delta-function. Here we recall that although $B$ is a distribution, operator $A$ can be interpreted
in the classical sense due to the fact \cite[Theorem~3.17]{Adams03} that distributional and classical derivatives coincide whenever
the latter exist (and certainly are continuous on $\RO$).

If, however, we start from the pure point-interaction (that is, $\delta$-interaction) and integrate $B$ in $\Sigma$, we derive that the 
property $f(0_{+})=f(0_{-})$ is only the (additional, though reasonable) assumption, as also discussed by \cite{Coutinho97}. Moreover, 
the operator $H\upharpoonright\WpO{2}{\Sigma}$ is not self-adjoint, and it has deficiency indices, d.i., (2,2) as $\epsilon\downarrow0$. 
This means that additional boundary conditions at $\pm\epsilon$ are required, and so again, $f(0_{+})$ is not necessarily equal to 
$f(0_{-})$, in general. This is our motive to inspect the boundary condition in $D(A)$ in its most general form.

To this end, let us comment on the self-adjointness of operator $A$ ($B$).

Let us solve $H_{0}f_{z}=zf_{z}$ for some $z\in\C\backslash\R$. The solutions $f_{z}$ are of the form (\ref{eq:fss}), with
$k_{ss^{\prime}}$ in (\ref{eq:kss}) replaced by 

\begin{equation}
k_{ss^{\prime}}=\frac{s^{\prime}}{\sqrt{2}}\left(2z-\eta^{2}+s
\sqrt{\eta^{4}-4\eta^{2}z+\Omega^{2}} \right)^{1/2}\quad(s,s^{\prime}=\pm1;\eta,\Omega\geq0).
\label{eq:kss2}
\end{equation}

\noindent{}For $x>0$, one requires $\mrm{Re}\:k_{ss^{\prime}}<0$ in order to make solutions square integrable. This yields 
$(s,s^{\prime})=(1,-1)$ and $(-1,-1)$. For $x<0$, however, $\mrm{Re}\:k_{ss^{\prime}}>0$, and possible values are $(s,s^{\prime})=(1,1)$ 
and $(-1,1)$. Evidently, the intersection of possible solutions which are square integrable in the whole $\R$ is the empty set. In
terms of deficiency indices, operator $A$ has d.i. (0,0), hence self-adjoint.

A general solution to $Bf_{z}=zf_{z}$ for $z\in\C\backslash\R$ can be written in the form

\begin{align}
f_{z}(x)=&-\frac{\gamma}{2\pi}\int_{-\infty}^{\infty}dp\:\frac{e^{ipx}}{\Delta_{z}(p)}
((p^{2}-z)\ot  I-\hat{U}(p))f(0), \nonumber \\
\Delta_{z}(p)=&(p^{2}-z)^{2}-\eta^{2}p^{2}-(\Omega/2)^{2}\quad(\eta,\Omega\geq0), \label{eq:fp}
\end{align}

\noindent{}where $\hat{U}(p)=\eta p\ot \sigma_{2}+(\Omega/2)\ot \sigma_{3}$ is the Fourier transform of $U$. To see this,
one simply needs to solve $\widehat{(Bf)}(p)=z\hat{f}(p)$ ($p\in\R$) by noting that 
$\widehat{(Bf)}(p)=\hat{H}_{0}(p)\hat{f}(p)+\gamma f(0)$, in agreement with (\ref{eq:B}); here $\hat{H}_{0}(p)=p^{2}\ot  I+
\hat{U}(p)$. It follows from (\ref{eq:fp}) that the
Fourier transform $\hat{f}_{z}$ of $f_{z}$ is proportional to $p^{-2}$. As a result, $p^{2}\hat{f}_{z}$ is not in $\Lp{\R}$
\cite[Sec.~IX.6]{Reed75}, hence $B$ has d.i. (0,0). Similarly to the case for the Dirac operator, one can also construct the
quadratic form $\gamma\vert f(0)\vert^{2}$ and show that it satisfies the KLMN theorem \cite[Theorem~X.17]{Reed75} with respect to
$H_{0}\upharpoonright \WpO{2}{\RO}$.

\section{Fermi pseudopotential}\label{sec:fermi}

In the present section we consider the operator

\begin{align}
A_{0}=U,\quad D(A_{0})=&
\Biggl\{f=\left(\begin{matrix}f_{1} \\ f_{2} \end{matrix}\right)\in\Wp{1}{\RO}\co 
\gamma f(0)=f^{\prime}(0_{+})-f^{\prime}(0_{-}) \nonumber \\
&+(i\eta\ot \sigma_{2})(f(0_{+})-f(0_{-})),Uf\in\Lp{\R} \Biggr\} \label{eq:A0}
\end{align}

\noindent{}($\gamma\in\RO;\eta\geq0$). As discussed in Sec.~\ref{sec:intro} of the present paper, $U$ has a meaning of the atom-light
coupling originated from the synthetic gauge fields (for more details, the reader is referred to \cite{Dalibard11}). Now we wish to 
examine the properties of its representative $A_{0}$.

The arguments of self-adjointness are similar to those for operator $A$ in the previous section. One solves $Uf_{z}=zf_{z}$
with respect to $f_{z}=\left(\begin{matrix}f_{1,z} \\ f_{2,z}\end{matrix}\right)$ for $z\in\C\backslash\R$, and gets that

\begin{align}
&f_{1,z}(x)=c_{1}\cosh(\omega_{z}x)+c_{2}\sqrt{\frac{\Omega+2z}{\Omega-2z}}\sinh(\omega_{z}x), \nonumber \\
&f_{2,z}(x)=c_{2}\cosh(\omega_{z}x)+c_{1}\sqrt{\frac{\Omega-2z}{\Omega+2z}}\sinh(\omega_{z}x) \label{eq:fz12}
\end{align}

\noindent{}($c_{1},c_{2}\in\C;x\in\RO;\omega_{z}=\sqrt{\Omega^{2}-4z^{2}}/(2\eta);\Omega\geq0;\eta>0$). Clearly, $f_{z}$ is not in 
$\Lp{\R}$, hence $A_{0}$ has d.i. (0,0). [Alternatively, one can explore the Weyl's criterion by noting from (\ref{eq:f12}) that there
is one solution in $L^{2}$ as $x\to\infty$, and one solution as $x\to-\infty$.]

The boundary condition in (\ref{eq:A0}) suggests that, similarly to the case of operator $A$ and its distributional version $B$,
there should be some weak form, $B_{0}$, of $A_{0}$ as well.

Given $B_{0}=U+V_{F}\ot  I$ on $\WpO{1}{\RO}$ for some distribution $V_{F}$. Let us integrate $(U+V_{F}\ot  I)f$ in $\Sigma$ 
for $f\in D(A_{0})$, and then take the limit $\epsilon\downarrow0$,

\begin{align}
0=&\int_{-\epsilon}^{\epsilon}(U+V_{F}\ot  I)f(x)dx=
-(i\eta\ot \sigma_{2})(f(0_{+})-f(0_{-})) \nonumber \\
&+\int_{-\epsilon}^{\epsilon}(V_{F}\ot  I)f(x)dx
\Longrightarrow
\int_{-\epsilon}^{\epsilon}(V_{F}\ot  I)f(x)dx=(i\eta\ot \sigma_{2})(f(0_{+})-f(0_{-})) \nonumber \\
=&\gamma f(0)-(f^{\prime}(0_{+})-f^{\prime}(0_{-})). \label{eq:VFderiv}
\end{align}

In \citep{Coutinho04}, the authors have defined the modified $\delta^{\prime}$-interaction to which we refer as the
$\delta_{p}^{\prime}$-interaction,

\begin{equation}
\delta_{p}^{\prime}(f)=\delta^{\prime}(\tilde{f}),\quad\text{with}\quad
\tilde{f}(x)=\left\{\begin{array}{ll}f(x)-(f(0_{+})-f(0_{-}))/2, & x>0, \\ 
f(x)+(f(0_{+})-f(0_{-}))/2, & x<0. \end{array}\right.
\label{eq:deltap}
\end{equation}

\noindent{}The reason for modifying the original $\delta^{\prime}$-interaction is that it is not applicable to discontinuous
functions, as pointed out by \cite{Coutinho97}. The integral \cite[Eq.~(44)]{Coutinho97}

$$
\int_{-\epsilon}^{\epsilon}\delta^{\prime}(x)f(x)dx=-\frac{1}{2}(f^{\prime}(0_{+})+f^{\prime}(0_{-}))-
\frac{1}{2\alpha}(f(0_{+})-f(0_{-}))\quad(0<\alpha<\epsilon)
$$

\noindent{}diverges for discontinuous functions, as $\epsilon\downarrow0$, because of the last term. On the other hand
(see also \cite[Eq.~(24)]{Coutinho04}), the integral

\begin{align*}
&\int_{-\epsilon}^{\epsilon}\delta_{p}^{\prime}(x)f(x)dx=\int_{-\epsilon}^{\epsilon}\delta^{\prime}(x)\tilde{f}(x)dx=
-\int_{-\epsilon}^{\epsilon}\delta(x)\tilde{f}^{\prime}(x)=-\frac{1}{2}(f^{\prime}(0_{+})+f^{\prime}(0_{-}))
\end{align*}

\noindent{}is convergent. Below we show that the divergent term can be canceled in the following manner:

\begin{prop}\label{prop:1}
Let $f\in C^{1}(\RO)$. Let $\delta_{p}^{\prime}$ be as in (\ref{eq:deltap}). Then for $\epsilon\downarrow0$,

\begin{equation}
\int_{-\epsilon}^{\epsilon}(\delta_{p}^{\prime}(x_{-})-\delta_{p}^{\prime}(x_{+}))f(x)dx=
\int_{-\epsilon}^{\epsilon}(\delta^{\prime}(x_{-})-\delta^{\prime}(x_{+}))f(x)dx
=f^{\prime}(0_{-})-f^{\prime}(0_{+})
\label{eq:VFderiv2}
\end{equation}

\noindent{}where $\delta_{p}^{\prime}(x_{\pm})=\delta_{p}^{\prime}(x\pm\alpha)$ for $0<\alpha<\epsilon$, and the same for
$\delta^{\prime}(x_{\pm})$.
\end{prop}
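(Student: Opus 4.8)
The plan is to exploit that the shifted interactions $\delta_{p}^{\prime}(x\pm\alpha)$ and $\delta^{\prime}(x\pm\alpha)$ are concentrated at the points $\mp\alpha\in\RO$, i.e.\ \emph{away} from the discontinuity at the origin, where $f$ is perfectly regular. The hypothesis $f\in C^{1}(\RO)$ guarantees that $f$ is continuous at every nonzero point, so its jump at $\pm\alpha$ vanishes; this is exactly what neutralizes the modification built into $\delta_{p}^{\prime}$ and, simultaneously, removes the divergent $\tfrac{1}{2\alpha}(f(0_{+})-f(0_{-}))$ term that plagued the unshifted integral $\int_{-\epsilon}^{\epsilon}\delta^{\prime}(x)f(x)\,dx$ recalled just above.

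First I would establish the first equality in (\ref{eq:VFderiv2}). For $0<\alpha<\epsilon$ both centers $\alpha$ and $-\alpha$ lie in $\RO$, so by $C^{1}(\RO)$-regularity the one-sided limits of $f$ there coincide and the jump is zero. Reading the construction (\ref{eq:deltap}) relative to the shifted center (equivalently, passing the translation onto the test function, so that $\delta_{p}^{\prime}(x\pm\alpha)$ applied to $f$ reduces to $\delta^{\prime}$ applied to a translate of $f$ whose jump at the origin equals the jump of $f$ at the corresponding nonzero center), the correction $\pm\tfrac{1}{2}(f(0_{+})-f(0_{-}))$ becomes zero, so the modified function $\tilde{f}$ coincides with $f$ near $\pm\alpha$. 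Hence $\delta_{p}^{\prime}(x_{\pm})$ acts on $f$ exactly as $\delta^{\prime}(x_{\pm})$, which is the first equality.

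Next I would evaluate the ordinary $\delta^{\prime}$-integrals via the elementary action $\int_{-\epsilon}^{\epsilon}\delta^{\prime}(x-a)f(x)\,dx=-f^{\prime}(a)$, valid for $a\in(-\epsilon,\epsilon)$ whenever $f$ is differentiable at $a$. Taking $a=\alpha$ for the $x_{-}$ term and $a=-\alpha$ for the $x_{+}$ term gives $\int_{-\epsilon}^{\epsilon}(\delta^{\prime}(x_{-})-\delta^{\prime}(x_{+}))f(x)\,dx=-f^{\prime}(\alpha)+f^{\prime}(-\alpha)$. Since $0<\alpha<\epsilon$ forces $\alpha\downarrow0$ as $\epsilon\downarrow0$, the one-sided limits in the sense of (\ref{eq:V-delta-1}) give $f^{\prime}(\alpha)\to f^{\prime}(0_{+})$ and $f^{\prime}(-\alpha)\to f^{\prime}(0_{-})$, so the expression tends to $f^{\prime}(0_{-})-f^{\prime}(0_{+})$, the asserted right-hand side.

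The main obstacle is the first step: one must argue carefully that translating the interaction off the origin genuinely renders the modification in (\ref{eq:deltap}) inert, rather than merely shifting the divergence to the new center. This is where $f\in C^{1}(\RO)$ is indispensable --- it forces the jump of $f$ at each nonzero center $\pm\alpha$ to vanish, so the term responsible for the $\tfrac{1}{2\alpha}$ blow-up is simply absent. Once this reduction is secured the remaining steps are routine; the only point worth a word is that $f^{\prime}(0_{\pm})$ are read as the one-sided limits $\lim_{x\to0^{\pm}}f^{\prime}(x)$, consistent with the convention fixed in (\ref{eq:V-delta-1}).
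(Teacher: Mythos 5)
Your proposal is correct, and it reaches the result by a slightly different mechanism than the paper's proof. The paper works entirely from the regularization (\ref{eq:delta}): it writes both $\delta^{\prime}(x\mp\alpha)$ and $\delta_{p}^{\prime}(x\mp\alpha)$ as limits of difference quotients of deltas (with $0<\beta<\alpha<\epsilon$, $\alpha+\beta<\epsilon$), evaluates, and observes that in the $\delta_{p}^{\prime}$ case the constants $\pm\tfrac{1}{2}(f(0_{+})-f(0_{-}))$ built into $\tilde{f}$ cancel \emph{pairwise inside each difference quotient}, precisely because $0<\beta<\alpha$ keeps both sample points $\alpha\pm\beta$ (resp. $-\alpha\pm\beta$) on one side of the origin; both integrals then equal $-f^{\prime}(\alpha)+f^{\prime}(-\alpha)$, and $\alpha\downarrow0$ gives the claim. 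You instead prove the first equality conceptually, by reading the shifted $\delta_{p}^{\prime}$ through translation so that the relevant jump sits at the center $\pm\alpha$, where $C^{1}(\RO)$-regularity makes it vanish, and then invoke the elementary action $\int\delta^{\prime}(x-a)f\,dx=-f^{\prime}(a)$. One caveat: under the paper's own convention the modification in (\ref{eq:deltap}) stays pinned to the origin even when the interaction is shifted, so your statement that ``$\tilde{f}$ coincides with $f$ near $\pm\alpha$'' is literally false whenever $f$ jumps at $0$ --- there $\tilde{f}$ and $f$ differ by the nonzero constant $\mp\tfrac{1}{2}(f(0_{+})-f(0_{-}))$. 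The first equality nevertheless survives, because that difference is \emph{locally constant} near $\pm\alpha$ and $\delta^{\prime}$ annihilates constants (which is exactly what the paper's pairwise cancellation verifies); to make your argument airtight under either reading you should add that one line. What your route buys is economy (the first equality reduces everything to a single standard $\delta^{\prime}$ computation, with no $\beta$-regularization); what the paper's route buys is immunity to the interpretation question, since it computes directly from the definitions (\ref{eq:deltap}) and (\ref{eq:delta}). Both proofs share the implicit assumption that the one-sided limits $f^{\prime}(0_{\pm})$ exist, which you correctly flag as the reading of the right-hand side.
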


\begin{proof}
To prove the statement we only need the definition of $\delta_{p}^{\prime}$, (\ref{eq:deltap}), and that of $\delta^{\prime}$,
\cite{Coutinho97,Griffiths93},

\begin{equation}
\delta^{\prime}(x)=\lim_{\beta\downarrow0}\frac{1}{2\beta}(\delta(x+\beta)-\delta(x-\beta)).
\label{eq:delta}
\end{equation}

Let $0<\beta<\alpha<\epsilon$ and $\alpha+\beta<\epsilon$ for $\epsilon>0$ arbitrarily small. By (\ref{eq:delta}),

\begin{align*}
&\int_{-\epsilon}^{\epsilon}(\delta^{\prime}(x-\alpha)-\delta^{\prime}(x+\alpha))f(x)dx=\frac{1}{2\beta}
\int_{-\epsilon}^{\epsilon}[(\delta(x-\alpha+\beta)-\delta(x-\alpha-\beta)) \\
&-(\delta(x+\alpha+\beta)-\delta(x+\alpha-\beta))]f(x)dx
=\frac{1}{2\beta}[(f(\alpha-\beta)-f(\alpha+\beta)) \\
&-(f(-\alpha-\beta)-f(-\alpha+\beta))]=-\frac{f(\alpha+\beta)-f(\alpha-\beta)}{2\beta}+
\frac{f(-\alpha+\beta)-f(-\alpha-\beta)}{2\beta} \\
&=-f^{\prime}(\alpha)+f^{\prime}(-\alpha).
\end{align*}

\noindent{}In the limit $\alpha\downarrow0$, this gives (\ref{eq:VFderiv2}).

By (\ref{eq:deltap}) and (\ref{eq:delta}),

\begin{align*}
&\int_{-\epsilon}^{\epsilon}(\delta_{p}^{\prime}(x-\alpha)-\delta_{p}^{\prime}(x+\alpha))f(x)dx=
\int_{-\epsilon}^{\epsilon}(\delta^{\prime}(x-\alpha)-\delta^{\prime}(x+\alpha))\tilde{f}(x)dx \\
&=\frac{1}{2\beta}\int_{-\epsilon}^{\epsilon}[(\delta(x-\alpha+\beta)-\delta(x-\alpha-\beta))-
(\delta(x+\alpha+\beta)-\delta(x+\alpha-\beta))]\tilde{f}(x)dx \\
&=\frac{1}{2\beta}[(\tilde{f}(\alpha-\beta)-\tilde{f}(\alpha+\beta))-
(\tilde{f}(-\alpha-\beta)-\tilde{f}(-\alpha+\beta))]=\frac{1}{2\beta}\Biggl[\Biggl(f(\alpha-\beta) \\
&-\frac{f(0_{+})-f(0_{-})}{2}-f(\alpha+\beta)+\frac{f(0_{+})-f(0_{-})}{2}\Biggr)-\Biggl(f(-\alpha-\beta)+\frac{f(0_{+})-f(0_{-})}{2} \\
&-f(-\alpha+\beta)-\frac{f(0_{+})-f(0_{-})}{2}\Biggr)\Biggr]=
-\frac{f(\alpha+\beta)-f(\alpha-\beta)}{2\beta}+\frac{f(-\alpha+\beta)-f(-\alpha-\beta)}{2\beta} \\
&=-f^{\prime}(\alpha)+f^{\prime}(-\alpha).
\end{align*}

\noindent{}In the limit $\alpha\downarrow0$, we again derive (\ref{eq:VFderiv2}). The proof is accomplished.
\end{proof}

We apply Proposition~\ref{prop:1} to functions in $D(A_{0})$. Then the substitution of the left-hand side of (\ref{eq:VFderiv2}) in
(\ref{eq:VFderiv}) along with $\int_{-\epsilon}^{\epsilon}\delta(x)f(x)dx=f(0)$ ($f(0)$ as in (\ref{eq:V-delta-1})) yields

\begin{equation}
B_{0}=(U+V_{F}\ot  I)\upharpoonright\WpO{1}{\RO},\quad
V_{F}(x)=\gamma\delta(x)+\delta^{\prime}(x_{-})-\delta^{\prime}(x_{+})
\label{eq:VF}
\end{equation}

\noindent{}($\gamma,x\in\RO$), with $\delta^{\prime}(x_{-})-\delta^{\prime}(x_{+})$ relevant to Proposition~\ref{prop:1}.

By virtue of (\ref{eq:VF}) we have found that suitably rotated in spin space (recall the unitary operator
$I\ot  e^{-i\theta\sigma_{3}}$, with $\theta\equiv 3\pi/4\!\!\!\mod\pi$, discussed in Sec.~\ref{sec:intro}), the operator $A_{0}/\eta$, 
with $A_{0}$ as in (\ref{eq:A0}) and the spin-orbit coupling $\eta>0$, describes the Dirac-like (or Weyl--Dirac) particle
of spin one-half and mass $\Omega/(2\eta)$ moving in the Fermi pseudopotential $V_{F}/\eta$.

We close the present section with the spectral properties of $A_{0}$ ($B_{0}$).

\begin{thm}\label{thm:A0}
\begin{enumerate}[\upshape (i)]
\item 
The resolvent of $A_{0}$ is given by

\begin{align*}
(R_{z}(A_{0})f)(x)=&\int_{-\infty}^{\infty}dx^{\prime}\:(A_{0}-z\ot  I)^{-1}
(x-x^{\prime})f(x^{\prime})
\\
\intertext{($f\in\Lp{\R}\cap L^{1}(\R)^{2}$), with the integral kernel (Green's function)}
(A_{0}-z\ot  I)^{-1}(x-x^{\prime})=&\frac{2\eta^{2}\omega_{z}(A_{0}^{0}-z\ot  I)^{-1}(x-x^{\prime}) }{
(\gamma z+2\omega_{z}(\eta^{2}+z))^{2}-(\Omega/2)^{2}(\gamma+2\omega_{z})^{2} } \\
&\times [2\eta^{2}\omega_{z}\ot  I-(\gamma+2\omega_{z})((\Omega/2)\ot \sigma_{3}-z\ot  I)] \\
\intertext{($x\neq x^{\prime};x,x^{\prime}\in\R;z\in\C\backslash\sigma(A_{0});\Omega,\eta>0;
\mrm{Re}\:\omega_{z}\neq0;\gamma\in\RO$), where $A_{0}^{0}=U\upharpoonright W_{0}^{1}(\RO)^{2}$ and}
(A_{0}^{0}-z\ot  I)^{-1}(x-x^{\prime})=&
\frac{e^{-\omega_{z}\vert x-x^{\prime}\vert}\ot  I}{2\eta^{2}\omega_{z}}\:
(i\eta\omega_{z}\:\mrm{sgn}(x-x^{\prime})\ot \sigma_{2}+(\Omega/2)\ot \sigma_{3}+z\ot  I)
\end{align*}

\noindent{}($x\neq x^{\prime};x,x^{\prime}\in\R;z\in \C\backslash\sigma(A_{0}^{0});\Omega,\eta>0;
\mrm{Re}\:\omega_{z}\neq0$), where $\omega_{z}$ is as in (\ref{eq:fz12});

\item 

\begin{align*}
\disc(A_{0})=&\Bigl\{-\Omega/2<\varepsilon<\Omega/2\co \gamma/2+\omega\pm\eta\sqrt{(\Omega\mp2\varepsilon)/(\Omega\pm2\varepsilon)}=0; \\
&\omega=\sqrt{\Omega^{2}-4\varepsilon^{2}}/(2\eta);
\gamma<0;\Omega,\eta>0\Bigr\},\quad\text{with the eigenfunctions} \nonumber \\
f(x)=& f(0)e^{-\omega \vert x\vert}+(\Theta(-x)e^{\omega x}-\Theta(x)e^{-\omega x})
\left(\begin{matrix} f_{2}(0)\sqrt{\frac{\Omega+2\varepsilon}{\Omega-2\varepsilon}} \\ 
f_{1}(0)\sqrt{\frac{\Omega-2\varepsilon}{\Omega+2\varepsilon}} \end{matrix}\right)
\end{align*}

\noindent{}($x\in\RO;\Omega,\eta>0;\vert\varepsilon\vert<\Omega/2$), where $\Theta$ denotes the Heaviside theta function, 
and $f_{2}(0)=0$ ($f_{1}(0)=0$) for the upper (lower) sign in $\disc(A_{0})$;

\item 
$\disc(B_{0})=\disc(A_{0})$, with $\mrm{ker}(\varepsilon\ot  I-B_{0})$ ($\varepsilon\in\disc(B_{0})$) containing equivalence classes of
functions $f(x)=-(\gamma+2\omega)(A_{0}^{0}-\varepsilon\ot  I)^{-1}(x)f(0)$ ($x\in\RO;\gamma<0;\omega>0$);

\item $\ess(A_{0})=\ess(B_{0})=\sigma(A_{0}^{0})=(-\infty,-\Omega/2]\cup[\Omega/2,\infty)$ ($\Omega\geq0$);

\item 
There are no eigenvalues embedded into the essential spectrum: $\disc(A_{0})\cap\ess(A_{0})=\varnothing$.
\end{enumerate}
\end{thm}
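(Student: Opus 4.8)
The plan is to take $A_0^0=U\upharpoonright\WpO{1}{\RO}$ as the unperturbed (free, translation-invariant) operator and to build everything else from its Green's function. For part (i) I would first obtain $(A_0^0-z\ot I)^{-1}(x-x')$ by solving the first-order matrix system $(U-z\ot I)G=\delta(x-x')\ot I$ directly: off the diagonal the columns of $G$ are the homogeneous solutions already recorded in (\ref{eq:fz12}), selected so that $G$ decays as $\vert x-x'\vert\to\infty$ (which forces the $e^{-\omega_z\vert x-x'\vert}$ behaviour with $\mrm{Re}\:\omega_z>0$), while the coefficients are fixed by the jump across $x=x'$ produced by the first-order term $-i\eta\nabla\ot\sigma_2$. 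This reproduces the stated free kernel, including the normalization $1/(2\eta^2\omega_z)$ and the $\mrm{sgn}(x-x')\ot\sigma_2$ piece. To pass from $A_0^0$ to $A_0$ I would use a Krein-type resolvent identity: the boundary condition defining $D(A_0)$ is a rank-$\le 2$ modification at $x=0$, so $(A_0-z\ot I)^{-1}=(A_0^0-z\ot I)^{-1}+K_z$ with $K_z$ a separable kernel whose amplitude is fixed by demanding that the resolvent of $A_0$ map $\Lp{\R}\cap L^1(\R)^2$ into $D(A_0)$. The scalar denominator $(\gamma z+2\omega_z(\eta^2+z))^2-(\Omega/2)^2(\gamma+2\omega_z)^2$ is then exactly the determinant of the $2\times2$ system that enforces the $D(A_0)$ boundary condition.

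Parts (ii) and (iii) follow from the analytic structure of this resolvent. For (ii) the eigenvalues are the real $z=\varepsilon$ in the resolvent set of $A_0^0$ at which the denominator vanishes; writing it as a difference of squares and factoring gives the two branches $\gamma/2+\omega\pm\eta\sqrt{(\Omega\mp2\varepsilon)/(\Omega\pm2\varepsilon)}=0$, each of which can vanish only in the gap $\vert\varepsilon\vert<\Omega/2$ (where $\omega=\sqrt{\Omega^2-4\varepsilon^2}/(2\eta)>0$ is real) and only when $\gamma<0$, since otherwise each bracket is a sum of positive terms. I would corroborate (ii) by the direct route: set $z=\varepsilon$ in (\ref{eq:fz12}), retain the half-line combinations that are square-integrable (the $e^{-\omega\vert x\vert}$ pieces), and impose the $D(A_0)$ boundary condition to recover simultaneously the secular equation and the stated eigenfunction, the constraints $f_2(0)=0$ or $f_1(0)=0$ selecting the upper or lower branch. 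For (iii) I would read off the residue of the full kernel at $z=\varepsilon$: it is a rank-one projection onto $\mrm{ker}(\varepsilon\ot I-B_0)$, giving the representative $f(x)=-(\gamma+2\omega)(A_0^0-\varepsilon\ot I)^{-1}(x)f(0)$, and a short computation identifies this $\Lp{\R}$ class with the eigenfunction of (ii); since eigenvalues and eigenfunctions match, $\disc(B_0)=\disc(A_0)$.

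For (iv) I would first compute $\sigma(A_0^0)$ by Fourier transform: $\hat U(p)=\eta p\ot\sigma_2+(\Omega/2)\ot\sigma_3$ has eigenvalues $\pm\sqrt{\eta^2p^2+(\Omega/2)^2}$, and as $p$ runs over $\R$ these sweep out $(-\infty,-\Omega/2]\cup[\Omega/2,\infty)$; since $A_0^0$ is unitarily a multiplication operator, this set is its (purely essential) spectrum. Because $(A_0-z\ot I)^{-1}-(A_0^0-z\ot I)^{-1}=K_z$ is finite rank, Weyl's theorem on the stability of the essential spectrum under relatively compact perturbations gives $\ess(A_0)=\ess(A_0^0)=\sigma(A_0^0)$; the same argument applied to $B_0$, whose resolvent difference from $A_0^0$ is again finite rank (cf.\ (\ref{eq:fp})), yields $\ess(B_0)=\sigma(A_0^0)$. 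Finally (v) is immediate from (ii) and (iv): $\disc(A_0)$ lies strictly inside the open gap $(-\Omega/2,\Omega/2)$ whereas $\ess(A_0)$ is its complement, so $\disc(A_0)\cap\ess(A_0)=\varnothing$.

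I expect the main obstacle to be part (i): assembling the full matrix Green's function so that the $2\times2$ boundary-condition system produces precisely the stated denominator and numerator. The delicate points are the jump prescription for the first-order operator $-i\eta\nabla\ot\sigma_2$ (which pins down both the normalization and the $\mrm{sgn}(x-x')\ot\sigma_2$ term in the free kernel) and the bookkeeping of the weights $\sqrt{(\Omega\pm2z)/(\Omega\mp2z)}$ once the $D(A_0)$ condition, which couples the derivative jump $f'(0_+)-f'(0_-)$ to the value jump through $i\eta\ot\sigma_2$, is imposed. Once the denominator is in its factored form, parts (ii)--(v) are comparatively routine algebra and standard spectral theory.
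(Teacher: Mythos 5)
Your construction of the free kernel $(A_{0}^{0}-z\ot I)^{-1}(x-x')$ by ODE matching across the diagonal is fine and equivalent to the paper's route (the paper gets it by Fourier transform, from $\hat{G}_{0}^{0}(p;z)=(z\ot I+\hat{U}(p))/(\eta^{2}(p^{2}+\omega_{z}^{2}))$). The genuine gap is in your passage from $A_{0}^{0}$ to $A_{0}$. You propose the standard Krein-type identity $(A_{0}-z\ot I)^{-1}=(A_{0}^{0}-z\ot I)^{-1}+K_{z}$ with $K_{z}$ a separable rank-$\le2$ kernel, and claim this "reproduces precisely the stated denominator and numerator." It cannot: the formula asserted in Theorem~\ref{thm:A0}-(i) is of \emph{convolution} type --- the free kernel at $x-x'$ multiplied on the right by a constant (in $x,x'$) matrix --- whereas a separable correction built from products of the type $e^{-\omega_{z}\vert x\vert}\cdots e^{-\omega_{z}\vert x'\vert}$ is not a function of $x-x'$; for $x,x'$ on the same side of the origin the two expressions genuinely differ. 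Indeed, no kernel of the stated convolution form can satisfy the classical resolvent equation, since applying $(U-z\ot I)$ in $x$ gives $\delta(x-x')M(z)\neq\delta(x-x')\,I\ot I$. This is exactly why the paper does \emph{not} define its Green's function that way: it defines $G_{0}$ through the distributional equation $(U+V_{F}\ot I-z\ot I)G_{0}=\delta\ot I$, with the Fermi pseudopotential $V_{F}$ of (\ref{eq:VF}) acting at $x=0$ (not at $x=x'$), solves it in Fourier space to get the multiplicative self-consistency $G_{0}(x;z)=(A_{0}^{0}-z\ot I)^{-1}(x)\Phi(\gamma;z)$, and then solves a linear matrix equation for $\Phi(\gamma;z)$; the paper even warns that $(A_{0}-z\ot I)R_{z}(A_{0})=I\ot I$ "is meaningless in the classical sense." So your method, carried out faithfully, proves a different formula (the standard resolvent, whose poles coincide with the stated denominator but whose kernel does not have the stated form); to prove (i) as written you must adopt the paper's distributional definition of the kernel.

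Downstream the damage is limited, because only the pole structure is used. Your (ii) --- setting $z=\varepsilon$ in (\ref{eq:fz12}), keeping the decaying half-line combinations, and imposing the $D(A_{0})$ boundary condition --- is exactly the paper's argument and yields (\ref{eq:vareps}); note, however, that your sign reasoning for $\gamma<0$ ("each bracket is a sum of positive terms") only covers the upper branch $\gamma/2+\omega+\eta\sqrt{(\Omega-2\varepsilon)/(\Omega+2\varepsilon)}=0$; the lower branch contains a negative term and the positivity argument does not apply to it. Your (iii) via the residue of the kernel is in the same spirit as, though not identical to, the paper's direct Fourier solution of $B_{0}f=\varepsilon f$, and gives the same representative $f(x)=-(\gamma+2\omega)(A_{0}^{0}-\varepsilon\ot I)^{-1}(x)f(0)$. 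Your (iv) via Weyl's theorem on finite-rank resolvent differences is actually cleaner than the paper's dispersion-curve computation, but its input (the finite-rank difference) comes from the standard Krein formula, i.e.\ from the object you would have constructed in (i), not from the formula stated there. Item (v) is the same trivial observation in both.
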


\begin{figure}[htp]
\centering
\includegraphics[scale=0.50]{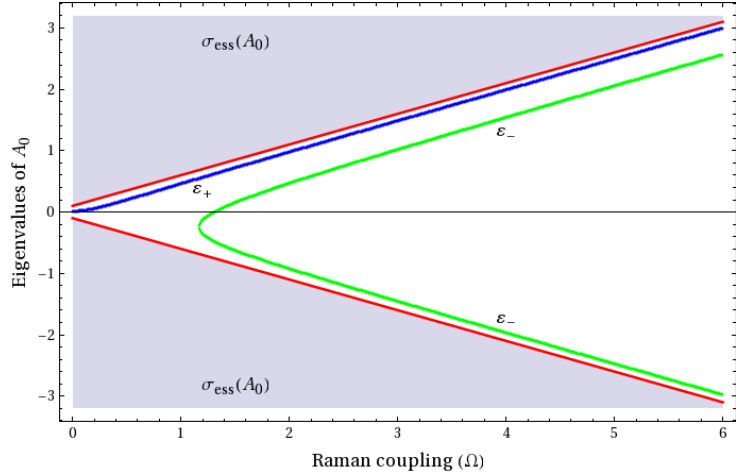}
\caption{ 
(Color online) Computed spectrum of operator $A_{0}$ (see Eq.~(\ref{eq:A0}) and Theorem~\ref{thm:A0}) for the 
point-interaction strength $\gamma=-1$ and the spin-orbit-coupling strength $\eta=0.6$ (in $\hbar=c=1$ units). The eigenvalues divided by 
$\eta>0$ are those of the one-dimensional Dirac-like operator for the particle of spin one-half and mass $\Omega/(2\eta)$ moving in the 
Fermi pseudopotential (\ref{eq:VF}). In figure, red lines show the border of the essential spectrum of $A_{0}$, which is
$\pm\Omega/2$. The blue $\varepsilon_{+}$ (green $\varepsilon_{-}$) line, showing the bound state as a function of the Raman coupling 
$\Omega>0$, corresponds to the eigenfunction with a zero-valued lower (upper) component at the origin $x=0$. 
}
\label{fig:EigenvaluesA0}
\end{figure}

\begin{rem}
(1) In order to find the eigenvalues $\varepsilon\in\disc(A_{0})$ explicitly, one needs to solve the cubic equation with respect to 
$\varepsilon$, as it is seen from Theorem~\ref{thm:A0}-(ii). The solutions to such type of equations are well known for a long time. 
However, their general form is rather complicated and we did not find it valuable here. Instead of that we displayed the spectrum of 
$A_{0}$ versus the Raman coupling $\Omega>0$ in Fig.~\ref{fig:EigenvaluesA0}.

(2) We also note that, unlike in Theorem~\ref{thm:A0}-(iii), where $f(0)$ is undetermined because of the delta-function, $f(0)$ in 
Theorem~\ref{thm:A0}-(ii) obeys the form as in (\ref{eq:V-delta-1}). The solutions in $\mrm{ker}(\varepsilon\ot  I-A_{0})$ are strict 
so that $f(0)$ can be replaced by any constant ($1$, say).
\end{rem}

\begin{proof}[Proof of Theorem~\ref{thm:A0}]

(i) The integral kernel $(A_{0}-z\ot  I)^{-1}(x)$ (for simplicity, we replace $x-x^{\prime}$ by $x$) is defined through the 
formal differential equation

$$
(U+V_{F}\ot  I-z\ot  I)G_{0}(x;z)=\delta(x)\ot  I.
$$

\noindent{}In agreement with (\ref{eq:VF}), $G_{0}(x;z)$ is of the form

\begin{equation}
G_{0}(x;z)=\frac{1}{2\pi}\int_{-\infty}^{\infty}dp\:e^{ipx}\hat{G}_{0}(p;z),\quad
\hat{G}_{0}(p;z)=\hat{G}_{0}^{0}(p;z)\Phi(\gamma;z),\quad
\hat{G}_{0}^{0}(p;z)=\frac{z\ot  I+\hat{U}(p)}{\eta^{2}(p^{2}+\omega_{z}^{2})}
\label{eq:Green}
\end{equation}

\noindent{}and $\Phi(\gamma;z)=I\ot  I-\gamma G_{0}(0;z)-G_{0}^{\prime}(0_{-};z)+G_{0}^{\prime}(0_{+};z)$. As one would have noticed, 
$\hat{G}_{0}^{0}(p;z)$ is the Fourier transform of $(A_{0}^{0}-z\ot  I)^{-1}(x)$.
Recalling that the integrals $\int_{-\infty}^{\infty}dp\:e^{ipx}/(p^{2}+\omega^{2})=(\pi/\omega)e^{-\omega\vert x\vert}$,
$\int_{-\infty}^{\infty}dp\:pe^{ipx}/(p^{2}+\omega^{2})=i\pi\:\mrm{sgn}(x)e^{-\omega\vert x\vert}$ for $x\in\RO$ and 
$\mrm{Re}\:\omega\neq0$, we derive the expression

\begin{equation}
G_{0}(x;z)=(A_{0}^{0}-z\ot  I)^{-1}(x)\Phi(\gamma;z),
\label{eq:GR0}
\end{equation}

\noindent{}with the integral kernel $(A_{0}^{0}-z\ot  I)^{-1}(x)$ as in the theorem. By using this equation, calculate 
$G_{0}(0;z)=(G_{0}(0_{+};z)+G_{0}(0_{-};z))/2$ and $G_{0}^{\prime}(0_{\pm};z)$, and get the equation for $\Phi(\gamma;z)$,

$$
[2\eta^{2}\omega_{z}\ot  I+(\gamma+2\omega_{z})((\Omega/2)\ot \sigma_{3}+z\ot  I)]\Phi(\gamma;z)=
2\eta^{2}\omega_{z}\ot  I.
$$

\noindent{}Substitute obtained expression of $\Phi(\gamma;z)$ in (\ref{eq:GR0}), replace $x$ by $x-x^{\prime}$ back again and get (i), as 
required. Note that $f\in L^{1}(\R)^{2}$ is because of $(B_{0}-z\ot  I)R_{z}(A_{0})=I\ot  I$ (in the sense of distributions), that 
is, the resolvent of $A_{0}$ ($B_{0}$) is a distribution, and hence the equation $(A_{0}-z\ot  I)R_{z}(A_{0})=I\ot  I$ is 
meaningless in the classical sense.

(ii) The discrete spectrum is easily recovered by setting the denominator of the resolvent of $A_{0}$ equal to zero. As for the
eigenfunctions, we begin with (\ref{eq:fz12}) by letting $z\equiv\varepsilon\in\disc(A_{0})$ and $\omega_{\varepsilon}\equiv\omega$. We 
rewrite (\ref{eq:fz12}) in the following form

\begin{align}
&f_{1}(x)=\tfrac{1}{2}\Theta(x)e^{-\omega x}\left(c_{1}-c_{2}\sqrt{\frac{\Omega+2\varepsilon}{\Omega-2\varepsilon}}\right)+
\tfrac{1}{2}\Theta(-x)e^{\omega x}\left(c_{1}+c_{2}\sqrt{\frac{\Omega+2\varepsilon}{\Omega-2\varepsilon}}\right), \nonumber \\
&f_{2}(x)=\tfrac{1}{2}\Theta(x)e^{-\omega x}\left(c_{2}-c_{1}\sqrt{\frac{\Omega-2\varepsilon}{\Omega+2\varepsilon}}\right)+
\tfrac{1}{2}\Theta(-x)e^{\omega x}\left(c_{2}+c_{1}\sqrt{\frac{\Omega-2\varepsilon}{\Omega+2\varepsilon}}\right) \label{eq:f12}
\end{align}

\noindent{}($\vert\varepsilon\vert<\Omega/2;\Omega,\eta>0;\omega>0$), where $f_{z\equiv\varepsilon}\equiv f$, and 
$f_{j,\varepsilon}\equiv f_{j}$ for $j=1,2$. By (\ref{eq:f12}), with $f(0)$ as in (\ref{eq:V-delta-1}),

$$
f(0)=\frac{1}{2}\left(\begin{matrix}c_{1} \\ c_{2} \end{matrix}\right),\quad
f(0_{+})-f(0_{-})=-\left(\begin{matrix}c_{2}\sqrt{\frac{\Omega+2\varepsilon}{\Omega-2\varepsilon}} \\ 
c_{1}\sqrt{\frac{\Omega-2\varepsilon}{\Omega+2\varepsilon}} \end{matrix}\right),\quad
f^{\prime}(0_{+})-f^{\prime}(0_{-})=-\omega\left(\begin{matrix}c_{1} \\ c_{2} \end{matrix}\right).
$$

\noindent{}But $f\in D(A_{0})$, (\ref{eq:A0}), and so it must hold

\begin{equation}
c_{1}\left(\frac{\gamma}{2}+\omega+\eta \sqrt{\frac{\Omega-2\varepsilon}{\Omega+2\varepsilon}}\right)=0,\quad
c_{2}\left(\frac{\gamma}{2}+\omega-\eta \sqrt{\frac{\Omega+2\varepsilon}{\Omega-2\varepsilon}}\right)=0
\label{eq:vareps}
\end{equation}

\noindent{}($\vert\varepsilon\vert<\Omega/2;\Omega,\eta>0;\omega>0$), with $c_{j}=2f_{j}(0)$ for $j=1,2$. After some elementary
simplifications, equations (\ref{eq:f12}) and (\ref{eq:vareps}) lead to (ii).

(iii) Let $f\in\mrm{ker}(\varepsilon\ot  I-B_{0})$ for some $\varepsilon\in\R$. Combining the Fourier transform of (\ref{eq:VF})
with (\ref{eq:Green}) we get that

$$
f(x)=-(A_{0}^{0}-\varepsilon\ot  I)^{-1}(x)\widehat{(V_{F}\ot  I)f},\quad
\widehat{(V_{F}\ot  I)f}=\gamma f(0)+f^{\prime}(0_{-})-f^{\prime}(0_{+}).
$$

\noindent{}Now, if we calculate $\gamma f(0)+f^{\prime}(0_{-})-f^{\prime}(0_{+})$ by taking $f$ from the left side of the above 
expressions, we get that $\widehat{(V_{F}\ot  I)f}=(\gamma+2\omega)f(0)$ and that

$$
\left(I\ot  I+\frac{\gamma+2\omega}{2\eta^{2}\omega}\left((\Omega/2)\ot \sigma_{3}+
\varepsilon\ot  I\right) \right)f(0)=0\quad(\eta>0;\omega>0)
$$

\noindent{}thus yielding (iii).

(iv) The essential spectrum of $A_{0}$ is found from the dispersion curve $\varepsilon(p)$ which in turn is found by taking
the Fourier transform of $U$ and solving the eigenvalue equation, namely,

$$
\det\left(\begin{matrix}\Omega/2-\varepsilon(p) & -i\eta p \\ i\eta p & -\Omega/2-\varepsilon(p) \end{matrix}\right)=0.
$$ 

\noindent{}The result reads $\varepsilon(p)=\pm\sqrt{(\Omega/2)^{2}+(\eta p)^{2}}$ for all $p\in\R$. 

The essential spectrum of $B_{0}$ is found from the integral kernel of the resolvent of $A_{0}$, by virtue of (iii). This is exactly
the case as for deriving the spectrum of $A_{0}^{0}$. Then one needs to solve $p^{2}+\omega_{z}^{2}=0$ with respect to 
$z\equiv\varepsilon(p)$ ($p\in\R$). The solutions are those as above, and hence (iv) holds.

(v) The present item immediately follows from (iv) and from the requirement that, for $\varepsilon\in\disc(A_{0})$, it holds
$-\Omega/2<\varepsilon<\Omega/2$.
\end{proof}

\section{Spin-orbit coupling induced states}\label{sec:extra}

\begin{lem}\label{lem:commutator}
We have: 

\begin{enumerate}
\item $[A,A_{0}]=0$ on $D(A)$ strictly;

\item $[B,B_{0}]=0$ almost everywhere in $\R\ot \C^{2}$.
\end{enumerate}
\end{lem}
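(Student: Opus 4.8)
The plan is to reduce both parts to a single computation at the level of differential expressions, with part~(2) the almost-everywhere counterpart of part~(1). For part~(1) I would first note that, as formal differential expressions, $A=H_{0}=-\Delta\ot I+U$ and $A_{0}=U$, so that $A=-\Delta\ot I+A_{0}$ and hence
\begin{equation*}
[A,A_{0}]=[-\Delta\ot I+U,U]=[-\Delta\ot I,U].
\end{equation*}
I would then evaluate $[-\Delta\ot I,U]$ term by term through the tensor identity $[X\ot Y,Z\ot W]=XZ\ot YW-ZX\ot WY$: for the spin--orbit piece $-i\eta\nabla\ot\sigma_{2}$ the spatial factors $-\Delta=-\nabla^{2}$ and $\nabla$ commute and $I$ commutes with $\sigma_{2}$, while for the Raman piece $(\Omega/2)\ot\sigma_{3}$ the factor $-\nabla^{2}$ commutes with the constant $\Omega/2$ and $I$ commutes with $\sigma_{3}$. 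Both contributions cancel, so $[-\Delta\ot I,U]=0$.

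The word ``strict'' I would read as a statement about the differential expression rather than about operator composition: the coefficients of $[A,A_{0}]=[-\Delta\ot I,U]$ vanish identically because they are assembled from the \emph{constant-coefficient} operators $-\Delta\ot I$ and $U$. Consequently, on any function that is classically twice differentiable on $\RO$—in particular on the exponential solutions (\ref{eq:fss}) that will supply the common eigenfunctions of $A$ and $A_{0}$—the two orders of differentiation coincide and $[A,A_{0}]f=0$ holds pointwise on $\RO$. Compatibility of the two operators is guaranteed by $\Wp{2}{\RO}\subset\Wp{1}{\RO}$ together with the literal coincidence of the boundary condition at $x=0$ in (\ref{eq:A}) and (\ref{eq:A0}).

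For part~(2) the decisive observation is that all distributional corrections are concentrated at the origin. On $\RO$ the term $\gamma\delta\ot I$ in $B$ and the term $V_{F}\ot I=(\gamma\delta+\delta^{\prime}(x_{-})-\delta^{\prime}(x_{+}))\ot I$ in $B_{0}$ are supported at $x=0$ and therefore vanish, so $B$ restricts to $-\Delta\ot I+U=A$ and $B_{0}$ restricts to $U=A_{0}$ on $\RO$. Invoking part~(1) gives $[B,B_{0}]=[A,A_{0}]=0$ throughout $\RO$, and any remaining contribution to $[B,B_{0}]$ is a distribution supported on the Lebesgue-null set $\R\backslash\RO=\{0\}$. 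Hence $[B,B_{0}]=0$ almost everywhere in $\R\ot\C^{2}$.

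The step I expect to require the most care is precisely the meaning of ``strict'' in part~(1): since $A_{0}$ carries $\Wp{2}{\RO}$ only into $\Wp{1}{\RO}$, the composite $AA_{0}f$ need not lie in $\Lp{\R}$ for general $f\in D(A)$, so $[A,A_{0}]$ cannot be read as an equality of Hilbert-space operators on a common invariant domain. The resolution is to interpret part~(1) as the vanishing of the underlying differential expression on $\RO$—exactly what the construction of smooth common eigenfunctions requires—after which part~(2) simply records the same identity modulo the point-mass obstruction at $x=0$.
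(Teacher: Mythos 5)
Your proposal is correct and follows essentially the same route as the paper: part (1) is reduced to the formal identity $[H_{0},U]=[-\Delta\ot I,U]=0$ (with compatibility of domains coming from $\Wp{2}{\RO}\subset\Wp{1}{\RO}$, hence $D(A)\subset D(A_{0})$), and part (2) is obtained by discarding the $\delta$- and $\delta^{\prime}$-terms away from the origin and absorbing the point $x=0$ into a Lebesgue-null set, exactly as the paper does by integrating over $X\subseteq\RO$ and letting $X=(-\infty,-\epsilon)\cup(\epsilon,\infty)$ with $\epsilon\downarrow0$. The only ingredient of the paper's proof you omit is its supplementary remark that the resolvents $R_{z}(A)$, $R_{z_{0}}(A_{0})$ and the exponentials $e^{isA}$, $e^{itA_{0}}$ also commute (weakly, via the Green's functions (\ref{eq:Green}) and (\ref{eq:Green2})), which speaks to the operator-theoretic sense of commutation that your final paragraph flags as delicate, but this addition is not required for the two items as stated.
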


\begin{proof}
We note that $\Wp{p}{\RO}\subset\Wp{p^{\prime}}{\RO}$ for $p>p^{\prime}$; see eg \cite[p.~276]{Herczynski89}.
By (\ref{eq:A}) and (\ref{eq:A0}), $D(A)\subset D(A_{0})$. By (\ref{eq:B}) and (\ref{eq:VF}), $D(B)\subset D(B_{0})$.
Thus $[A,A_{0}]$ makes sense since $R(A_{0})\cap D(A)\subset R(A_{0})\cap D(A_{0})=D(A_{0})$, 
$R(A)\cap D(A_{0})\supset D(A)\cap D(A_{0})=D(A)$, and the same for $[B,B_{0}]$ ($R$ is the range).

Item (1) is easy to perform: $[A,A_{0}]$ on $D(A)$ is given by
$[H_{0},U]=[-\Delta\ot  I,U]=0$. The same applies to the resolvents $R_{z_{0}}(A_{0})$, $R_{z}(A)$ 
($z_{0},z\in\C$) and to the exponents $e^{itA_{0}}$, $e^{isA}$ ($t,s\in\R$) in consonance with 
\cite[Theorem~VIII.13]{Reed80}. The fact that the exponents commute follows from the commutation relation of resolvents. This can be 
seen by noting eg $R_{z_{0}}(A_{0})=i\int_{0}^{\infty}dt\:e^{-it(B_{0}-z_{0}\otimes  I)}$ ($\mrm{Im}\:z_{0}>0$). That the resolvents 
commute (weakly), the easiest way to see this is to apply (\ref{eq:Green}) and (\ref{eq:Green2}), where one concludes that the integral
$\int_{-\infty}^{\infty}([R_{z}(A),R_{z_{0}}(A_{0})]f)(x)dx$ is equal to 
$\int_{-\infty}^{\infty}[\hat{G}(0;z),\hat{G}_{0}(0;z_{0})]f(x)dx=0$, provided $f\in L^{1}(\R)^{2}$.

In order to prove (2), we integrate $[B,B_{0}]$ in the interval $X\subseteq\RO$ because $D(B)\subset D(B_{0})$ contains functions
which are well-defined for $x\in\RO$. In this case, all integrands containing $\delta$ or $\delta^{\prime}$ (see (\ref{eq:delta})) vanish
because the argument of $\delta$ ($\delta^{\prime}$) is nonzero for all $x\in X$. The remaining terms, that is, those which do not
include deltas, commute with each other.
Finally, we extend $X\subseteq\RO$ to the whole $\R$ by setting $X=(-\infty,-\epsilon)\cup(\epsilon,\infty)$ as 
$\epsilon\downarrow0$, and we have (2).
\end{proof}

We already know from Theorem~\ref{thm:A0}-(ii) that $\mrm{ker}(\varepsilon\ot  I-A_{0})\subset D(A_{0})$ is a nonempty set for 
$\varepsilon\in\disc(A_{0})$. Now, we assume that $\disc(A)\neq\varnothing$, and let $\lambda\in\disc(A)$. Then by 
Lemma~\ref{lem:commutator}, 

\begin{equation}
D(A)\supset\mrm{ker}(\varepsilon\ot  I-A_{0})\cap\mrm{ker}(\lambda\ot  I-A)
\equiv\mrm{ker}(\lambda(\varepsilon)\ot  I-A)
\label{eq:RAA0}
\end{equation}

\noindent{}for some $\lambda(\varepsilon)\in \so(A)\subset\disc(A)$. We say that the set $\so(A)$ contains
spin-orbit coupling induced states $\lambda(\varepsilon)$. This is because $\so(A)$ is nonempty only for nonzero spin-orbit coupling
$\eta>0$, in agreement with Theorem~\ref{thm:A0}.

Here, our main goal is to establish $\so(A)$. For that reason we prove that:

\begin{lem}\label{lem:ess}
Let $A$ and $B$ be as in (\ref{eq:A}) and (\ref{eq:B}). Then:

\begin{enumerate}[\upshape (i)]
\item 
The resolvent of $A$ is given by

\begin{align*}
(R_{z}(A)f)(x)=&\int_{-\infty}^{\infty}dx^{\prime}\:(A-z\ot  I)^{-1}(x-x^{\prime})f(x^{\prime})
\\
\intertext{($f\in\Lp{\R}\cap L^{1}(\R)^{2}$), with the integral kernel (Green's function)}
(A-z\ot  I)^{-1}(x-x^{\prime})=&2p_{1}p_{2}(p_{1}+p_{2})(A^{0}-z\ot  I)^{-1}(x-x^{\prime}) \\
&\times\frac{ p_{1}p_{2}(i\gamma+2(p_{1}+p_{2}))\ot  I-i\gamma((\Omega/2)\ot \sigma_{3}-z\ot  I ) }{ 
(2p_{1}p_{2}(p_{1}+p_{2})+i\gamma(p_{1}p_{2}+z))^{2}+(\gamma\Omega/2)^{2} } \\
\intertext{($x\neq x^{\prime};x,x^{\prime}\in\R;z\in\C\backslash\sigma(A);\Omega,\eta\geq0;\gamma\in\R;\mrm{Im}\:p_{j}>0;j=1,2$), 
where $A^{0}=H_{0}\upharpoonright W_{0}^{2}(\RO)^{2}$, and the integral kernel of $A^{0}$ is given by}
(A^{0}-z\ot  I)^{-1}(x-x^{\prime})=&
\frac{i}{2(p_{1}^{2}-p_{2}^{2})}\Biggl(\frac{e^{ip_{1}(x-x^{\prime})}}{p_{1}}
(p_{1}^{2}\ot  I-z\ot  I-\hat{U}(p_{1})) \\
&-\frac{e^{ip_{2}(x-x^{\prime})}}{p_{2}}
(p_{2}^{2}\ot  I-z\ot  I-\hat{U}(p_{2})) \Biggr)\quad(x>x^{\prime}), \\
=&
\frac{i}{2(p_{1}^{2}-p_{2}^{2})}\Biggl(\frac{e^{-ip_{1}(x-x^{\prime})}}{p_{1}}
(p_{1}^{2}\ot  I-z\ot  I-\hat{U}(-p_{1})) \\
&-\frac{e^{-ip_{2}(x-x^{\prime})}}{p_{2}}
(p_{2}^{2}\ot  I-z\ot  I-\hat{U}(-p_{2})) \Biggr)\quad(x<x^{\prime})
\end{align*}

\noindent{}($x,x^{\prime}\in\R;z\in\C\backslash\sigma(A^{0});\Omega,\eta\geq0;\mrm{Im}\:p_{j}>0;j=1,2$),
$p_{1,2}=s_{1,2}\sqrt{ z+\eta^{2}/2\pm(1/2)\sqrt{ \eta^{2}(\eta^{2}+4z)+\Omega^{2} } }$ ($s_{j}=\pm1;j=1,2$);

\item 
$\ess(A)=\ess(B)=\sigma(A^{0})=[J(\eta,\Omega),\infty)$, where 
$J(\eta,\Omega)$ is equal to $\lambda_{0}\equiv-[\eta^{2}+(\Omega/\eta)^{2}]/4$ for $0\leq\Omega\leq\eta^{2}$, and 
to $-\Omega/2$ for $\Omega>\eta^{2}\geq0$.
\end{enumerate}
\end{lem}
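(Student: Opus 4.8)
The plan is to mirror the scheme used for Theorem~\ref{thm:A0}, now carrying the second-order operator $H_{0}=-\Delta\ot I+U$ in place of $U$ and the pure point interaction $\gamma\delta\ot I$ in place of the Fermi pseudopotential. For (i) I would define the Green's function $G(x;z)$ through the formal equation $(H_{0}+\gamma\delta\ot I-z\ot I)G(x;z)=\delta(x)\ot I$ and pass to the Fourier transform. Because $\widehat{(\gamma\delta\ot I)G}=\gamma G(0;z)$, with $G(0;z)=(G(0_{+};z)+G(0_{-};z))/2$ as in (\ref{eq:V-delta-1}), and $\hat{H}_{0}(p)=p^{2}\ot I+\hat{U}(p)$, this gives $\hat{G}(p;z)=\hat{G}^{0}(p;z)\Phi(\gamma;z)$, where $\hat{G}^{0}(p;z)=(\hat{H}_{0}(p)-z\ot I)^{-1}=((p^{2}-z)\ot I-\hat{U}(p))/\Delta_{z}(p)$ is the Fourier transform of $(A^{0}-z\ot I)^{-1}(x)$ and $\Phi(\gamma;z)=I\ot I-\gamma G(0;z)$. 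Here $\Delta_{z}(p)$ is the quartic in (\ref{eq:fp}), which factors as $(p^{2}-p_{1}^{2})(p^{2}-p_{2}^{2})$ with $p_{1},p_{2}$ the momenta in the statement, so its roots in $p$ are $\pm p_{1},\pm p_{2}$.

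Next I would recover the free kernel by residues: closing the contour in the upper half-plane for $x-x'>0$ and in the lower for $x-x'<0$, under $\mrm{Im}\:p_{j}>0$, the pole at $p_{j}$ supplies the $j$th term and reproduces the two-case expression for $(A^{0}-z\ot I)^{-1}(x-x')$ in the statement. Setting $x=0_{\pm}$, averaging, and using $(\hat{U}(p_{j})+\hat{U}(-p_{j}))/2=(\Omega/2)\ot\sigma_{3}$ (the $\sigma_{2}$ pieces cancel), I obtain $G^{0}(0;z)=(i/K)((p_{1}p_{2}+z)\ot I+(\Omega/2)\ot\sigma_{3})$ with $K=2p_{1}p_{2}(p_{1}+p_{2})$. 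Imposing the self-consistency $G(0;z)=G^{0}(0;z)\Phi(\gamma;z)$ turns $\Phi=I\ot I-\gamma G(0;z)$ into $\Phi=(I\ot I+\gamma G^{0}(0;z))^{-1}$; since $I\ot I+\gamma G^{0}(0;z)$ has the form $a\ot I+b\ot\sigma_{3}$, the identity $(a\ot I+b\ot\sigma_{3})^{-1}=(a\ot I-b\ot\sigma_{3})/(a^{2}-b^{2})$ yields $\Phi$ explicitly, with the prefactor $K$ and the denominator $(K+i\gamma(p_{1}p_{2}+z))^{2}+(\gamma\Omega/2)^{2}$ emerging directly. Multiplying $(A^{0}-z\ot I)^{-1}(x-x')$ by this $\Phi$ gives (i).

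For (ii) I would first obtain $\sigma(A^{0})$ as the closure of the range of the dispersion curve, exactly as in Theorem~\ref{thm:A0}-(iv): solving $\det(p^{2}\ot I+\hat{U}(p)-\varepsilon(p)\ot I)=0$ gives the branches $\varepsilon_{\pm}(p)=p^{2}\pm\sqrt{(\Omega/2)^{2}+\eta^{2}p^{2}}$. The upper branch has range $[\Omega/2,\infty)$. The lower branch $\varepsilon_{-}$ is unbounded above and attains its minimum either at the stationary point $p^{2}=(\eta^{4}-\Omega^{2})/(4\eta^{2})$ when $0\leq\Omega\leq\eta^{2}$, where $\varepsilon_{-}=\lambda_{0}=-[\eta^{2}+(\Omega/\eta)^{2}]/4$, or at $p=0$ when $\Omega>\eta^{2}$, where $\varepsilon_{-}=-\Omega/2$; hence its range is $[J(\eta,\Omega),\infty)$. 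Since $J(\eta,\Omega)\leq-\Omega/2$, the union of the two ranges is $[J(\eta,\Omega),\infty)=\sigma(A^{0})$. Finally $\ess(A)=\ess(B)=\sigma(A^{0})$ follows from (i): the kernel is analytic in $z$ precisely where both $p_{j}(z)$ keep $\mrm{Im}\:p_{j}>0$, i.e. off $[J(\eta,\Omega),\infty)$, so the interaction contributes only isolated points to $\disc(A)$ and leaves the essential spectrum unchanged; equivalently, $R_{z}(A)-R_{z}(A^{0})$ is of finite rank and Weyl's theorem applies.

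The main obstacle is twofold. Algebraically, the four-pole residue evaluation together with the matrix inversion must be carried out keeping the non-commuting $\sigma_{2},\sigma_{3}$ structure straight; the $\sigma_{2}$ contributions cancel only after the symmetric combination $(\hat{U}(p_{j})+\hat{U}(-p_{j}))/2=(\Omega/2)\ot\sigma_{3}$ is used, and the denominator must be matched exactly. Conceptually, the delicate step is justifying $\ess(A)=\ess(B)$ rigorously: since the resolvent of $A$ ($B$) is only a distribution acting on $\Lp{\R}\cap L^{1}(\R)^{2}$, as already noted for $A_{0}$, the comparison with $A^{0}$ is cleanest through the analytic structure of the kernel rather than a direct application of Weyl's theorem, and one must check that $A^{0}$ carries no discrete spectrum so that $\sigma(A^{0})=\ess(A^{0})$.
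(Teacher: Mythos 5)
Your proposal is correct and follows essentially the same route as the paper: for (i) the same Green's-function ansatz $\hat{G}(p;z)=\hat{G}^{0}(p;z)\Psi(\gamma;z)$ with $\Psi=I\ot I-\gamma G(0;z)$, the factorization $\Delta_{z}(p)=(p^{2}-p_{1}^{2})(p^{2}-p_{2}^{2})$ and contour/residue evaluation, then solving for $\Psi$ at $x=0$ (your explicit $G^{0}(0;z)=(i/K)((p_{1}p_{2}+z)\ot I+(\Omega/2)\ot\sigma_{3})$ and the $a\ot I+b\ot\sigma_{3}$ inversion reproduce exactly the paper's equation for $\Psi$); for (ii) the same dispersion relation $\lambda_{\pm}(p)=p^{2}\pm\sqrt{\eta^{2}p^{2}+(\Omega/2)^{2}}$ minimized over the two regimes $\Omega\leq\eta^{2}$ and $\Omega>\eta^{2}$. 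Your added justification of $\ess(A)=\ess(B)=\sigma(A^{0})$ via Krein/Weyl stability under finite-rank resolvent differences is a small rigor supplement beyond the paper's terser argument, not a different method.
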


\begin{proof}
(i) The proof is pretty much similar to that of (\ref{eq:fp}) and Theorem~\ref{thm:A0}-(i). The integral kernel 
$(A-z\ot  I)^{-1}(x)$ (for simplicity, we replace $x-x^{\prime}$ by $x$) is defined through the formal differential equation

$$
(-\Delta\ot  I+U+\gamma\delta(x)\ot  I-z\ot  I)G(x;z)=\delta(x)\ot  I.
$$

\noindent{}Then 

\begin{equation}
G(x;z)=\frac{1}{2\pi}\int_{-\infty}^{\infty}dp\:e^{ipx}\hat{G}(p;z),\quad
\hat{G}(p;z)=\hat{G}^{0}(p;z)\Psi(\gamma;z),\quad
\hat{G}^{0}(p;z)=\frac{(p^{2}-z)\ot  I-\hat{U}(p)}{\Delta_{z}(p)},
\label{eq:Green2}
\end{equation}

\noindent{}with $\Delta_{z}(p)$ as in (\ref{eq:fp}) and $\Psi(\gamma;z)=I\ot  I-\gamma G(0;z)$. As one would have noticed,
$\hat{G}^{0}(p;z)$ is the Fourier transform of $(A^{0}-z\ot  I)^{-1}(x)$. For more convenience, we rewrite the denominator 
by $\Delta_{z}(p)=(p^{2}-p_{1}^{2})(p^{2}-p_{2}^{2})$, with $p_{j}$ ($j=1,2$) as in Lemma~\ref{lem:ess}-(i).

Without loss of generality, we assume that $\mrm{Im}\:p_{j}>0$ ($j=1,2$). Then the integration over $p\in\R$ can be performed in two 
distinct ways. Consider

$$
\varphi(\zeta)=
\frac{e^{i\zeta x}((\zeta^{2}-z)\ot  I-\eta \zeta\ot \sigma_{2}-(\Omega/2)\ot \sigma_{3})}{
(\zeta^{2}-p_{1}^{2})(\zeta^{2}-p_{2}^{2})}\quad(x\in\RO;z,\zeta\in\C),
$$

\noindent{}and integrate it around the contour $\mathcal{C}$ oriented counterclockwise, with the poles $p_{1}$, $p_{2}$. This implies
that the integral exists for $x>0$. Similarly, integrate $\varphi(\zeta)$ around the contour $\mathcal{C}^{\prime}$
oriented counterclockwise but with the poles $-p_{1}$, $-p_{2}$, and get $x<0$ for the existence of the integral. 
[We note that these two contours of integration are not unique. One can choose, for example, the contour with poles $p_{1}$, 
$-p_{2}$ ($\mrm{Im}\:p_{1}>0;\mrm{Im}\:p_{2}<0$) so that the integral exists for $x>0$, and the contour with poles $-p_{1}$, $p_{2}$
(again, $\mrm{Im}\:p_{1}>0;\mrm{Im}\:p_{2}<0$) so that the integral exists for $x<0$.]

By the residue theorem,

\begin{align*}
&\int_{-\infty}^{\infty}\varphi(p)dp+\lim_{R\to\infty}\int_{\mathcal{C}}\varphi(\zeta)d\zeta=
2\pi i\underset{\zeta=p_{1},p_{2}}{\mrm{res}}\:\varphi(\zeta), \\
&-\int_{-\infty}^{\infty}\varphi(p)dp+\lim_{R\to\infty}\int_{\mathcal{C}^{\prime}}\varphi(\zeta)d\zeta=
2\pi i\underset{\zeta=-p_{1},-p_{2}}{\mrm{res}}\:\varphi(\zeta),
\end{align*}

\noindent{}where the contour integration is performed over $\zeta=Re^{i\psi}$ ($\psi\in[0,\pi]$) in the first contour, and over 
$\zeta=Re^{i\psi}$ ($\psi\in[\pi,2\pi]$) in the second contour. In the limit $R\to\infty$, function $\vert\varphi(\zeta)\vert\to0$ for 
$x>0$ in the first integral, and for $x<0$ in the second one. 

The residues are easy to calculate by noting that 

$$
\frac{1}{(z^{2}-p_{1}^{2})(z^{2}-p_{2}^{2})}=\frac{1}{2(p_{1}^{2}-p_{2}^{2})}
\left(\frac{1}{p_{1}(z-p_{1})}-\frac{1}{p_{1}(z+p_{1})}-\frac{1}{p_{2}(z-p_{2})}+\frac{1}{p_{2}(z+p_{2})} \right).
$$

\noindent{}After some elementary simplifications, and replacing $x$ with $x-x^{\prime}$, we obtain the integral kernel 
of the resolvent of $A^{0}$ as in Lemma~\ref{lem:ess}-(i).

Following (\ref{eq:Green2}),

\begin{equation}
G(x;z)=(A^{0}-z\ot  I)^{-1}(x)\Psi(\gamma;z).
\label{eq:GR}
\end{equation}

\noindent{}By using this equation, calculate $G(0;z)=(G(0_{+};z)+G(0_{-};z))/2$ and get the equation for $\Psi(\gamma;z)$,

$$
[(i\gamma+2(p_{1}+p_{2}))\ot  I+(i\gamma/(p_{1}p_{2}))((\Omega/2)\ot \sigma_{3}+z\ot  I)]\Psi(\gamma;z)=
2(p_{1}+p_{2})\ot  I.
$$

\noindent{}Substitute obtained expression of $\Psi(\gamma;z)$ in (\ref{eq:GR}), replace $x$ with $x-x^{\prime}$ and get 
the resolvent of $A$ as required. That $f\in L^{1}(\R)^{2}$, the arguments are those as in the proof of Theorem~\ref{thm:A0}-(i).

\begin{figure}[htp]
\centering
\includegraphics[scale=0.50]{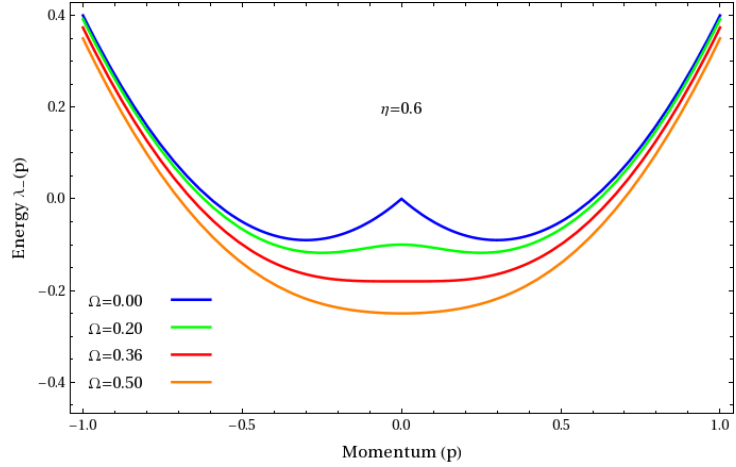}
\caption{ 
(Color online) Computed lower branch of dispersion in (\ref{eq:dispersion}) for the spin-orbit-coupling strength $\eta=0.6$
(in $\hbar=c=1$ units), for a range of Raman couplings $\Omega\geq0$. As $\Omega$ increases ($\Omega>\eta^{2}$), the two dressed
spin states \citep{Lin11} are merged into a single minimum $-\Omega/2$ at $p=0$. This is a regime when the spin-orbit coupling induced
states $\so(A)$, Theorem~\ref{thm:EA}, are observed below the continuous spectrum as well as above it. For $\Omega\leq\eta^{2}$, the spin 
states have two minima $-[\eta^{2}+(\Omega/\eta)^{2}]/4$ at $p=\pm\sqrt{\eta^{4}-\Omega^{2}}/(2\eta)$, and the spin-orbit induced
states are embedded into the essential spectrum of $A$.
}
\label{fig:Dispersion}
\end{figure}

(ii) The essential spectrum of $A$ as well as the spectrum of $A^{0}$ is found from (\ref{eq:Green2}) by solving $\Delta_{z}(p)=0$ 
($p\in\R$) with respect to $z\equiv\lambda(p)$, whereas for $B$, one needs to solve the same equation due to (\ref{eq:fp}). The solutions 
read

\begin{equation}
\lambda_{\pm}(p)=p^{2}\pm\sqrt{\eta^{2}p^{2}+(\Omega/2)^{2}}\geq \lambda_{-}(p).
\label{eq:dispersion}
\end{equation}

\noindent{}The lower bound of $\lambda_{\pm}(p)$ is found by differentiating $\lambda_{-}(p)$ with respect to $p\in\R$. One finds three
critical points: $p_{1}=0$, $p_{2}=-\sqrt{\eta^{4}-\Omega^{2}}/(2\eta)$ and $p_{3}=\sqrt{\eta^{4}-\Omega^{2}}/(2\eta)$. As seen,
$p_{2}$ and $p_{3}$ are in $\R$ only for $\Omega\leq\eta^{2}$. Hence it holds $\lambda_{\pm}(p)\geq-[\eta^{2}+(\Omega/\eta)^{2}]/4$. If, 
however, $\Omega>\eta^{2}$, only $p_{1}$ is valid. Then $\lambda_{\pm}(p)\geq-\Omega/2$. This proves that $\ess(A)=\ess(B)$, hence
(ii), and the proof of the statement is accomplished.
\end{proof}

\begin{rem}
For the illustrative and comparison purposes (see \cite[Fig.~1b]{Lin11} and \cite[Fig.~2c]{Galitski13}), we displayed the dispersion 
relation $\lambda_{-}(p)$, (\ref{eq:dispersion}), in Fig.~\ref{fig:Dispersion}.
\end{rem}

We are now in a position to establish the properties of spin-orbit coupling induced states.

\begin{thm}\label{thm:EA}
Given $A$ as in (\ref{eq:A}) and $A_{0}$ as in (\ref{eq:A0}). Then:

\begin{enumerate}[\upshape (i)]
\item 
$\disc(A)\supset\so(A)=
\bigl\{\varepsilon-\omega^{2}\co \varepsilon\in\disc(A_{0})\backslash\bigl\{-\eta^{2}/2,
\Omega/2-\eta^{2}\bigr\};\Omega,\eta>0\bigr\}$;
\item 
$\so(A)=\sigma_{<}(A)\cup\sigma_{>}(A),\quad\sigma_{>}(A)=\sigma_{1}(A)\cup\sigma_{2}(A)$;
\item 
$\sigma_{<}(A)=\bigl\{\lambda(\varepsilon)\in\so(A)\co \varepsilon\in\disc(A_{0});-\Omega/2<\varepsilon<\Omega/2-\eta^{2};
\Omega>\eta^{2}>0\bigr\}$;
\item 
$\sigma_{1}(A)=\bigl\{\lambda(\varepsilon)\in\so(A)\co \varepsilon\in\disc(A_{0});\Omega/2-\eta^{2}<\varepsilon<\Omega/2;
\Omega>\eta^{2}>0\bigr\}$;
\item 
$\sigma_{2}(A)=\bigl\{\lambda(\varepsilon)\in\so(A)\co \varepsilon\in\disc(A_{0});0<\Omega\leq\eta^{2}\bigr\}$;
\item 
$\so(A)\cap\ess(A)=\sigma_{2}(A)$ for $0<\Omega\leq\eta^{2}$;
\item 
$\so(A)\cap\ess(A)=\sigma_{1}(A)$ for $\Omega>\eta^{2}>0$;
\item 
$\so(B)=\so(A)$. The equivalence classes of functions from the kernel $\mrm{ker}(\lambda(\varepsilon)\ot  I-B)$, for
$\lambda(\varepsilon)\in\so(B)$, are of the form given in Theorem~\ref{thm:A0}-(iii).
\end{enumerate}

\noindent{}The eigenfunctions that correspond to $\lambda(\varepsilon)\in \so(A)$ are as in Theorem~\ref{thm:A0}-(ii).
\end{thm}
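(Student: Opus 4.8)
The plan is to let the commutator carry the argument. By Lemma~\ref{lem:commutator}-(1) we have $[A,A_{0}]=0$ on $D(A)$, and since $A=-\Delta\ot I+A_{0}$ formally (recall $H_{0}=-\Delta\ot I+U$ while $A_{0}=U$), every common eigenfunction $f$ with $A_{0}f=\varepsilon f$ and $Af=\lambda f$ must obey $(-\Delta\ot I)f=(\lambda-\varepsilon)f$ on $\RO$. First I would invoke Theorem~\ref{thm:A0}-(ii): the only candidates for such common eigenfunctions are the $A_{0}$-eigenfunctions, assembled from $e^{\pm\omega x}$ with $\omega=\sqrt{\Omega^{2}-4\varepsilon^{2}}/(2\eta)$ and $\varepsilon\in\disc(A_{0})$. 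Away from the origin these satisfy $\Delta f=\omega^{2}f$, so that $(-\Delta\ot I)f=-\omega^{2}f$ and
$$
\lambda(\varepsilon)=\varepsilon-\omega^{2}.
$$
To land in $D(A)$ rather than merely $D(A_{0})$, I would check that these functions already satisfy the boundary condition shared by (\ref{eq:A}) and (\ref{eq:A0}), are piecewise smooth hence in $\Wp{2}{\RO}$, and have $H_{0}f=(\varepsilon-\omega^{2})f\in\Lp{\R}$; this establishes $\disc(A)\supset\so(A)$ and the formula in (i), modulo the excluded values.

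Next I would identify the excluded set $\{-\eta^{2}/2,\Omega/2-\eta^{2}\}$ as precisely the $\varepsilon$ that map under $\lambda(\varepsilon)$ to the threshold $J(\eta,\Omega)$ of Lemma~\ref{lem:ess}-(ii). Substituting $\omega^{2}=(\Omega^{2}-4\varepsilon^{2})/(4\eta^{2})$ yields the convex parabola $\lambda(\varepsilon)=\varepsilon^{2}/\eta^{2}+\varepsilon-\Omega^{2}/(4\eta^{2})$, whose vertex sits at $\varepsilon=-\eta^{2}/2$ with value $\lambda_{0}=-[\eta^{2}+(\Omega/\eta)^{2}]/4$, and whose level set $\lambda(\varepsilon)=-\Omega/2$ consists of $\varepsilon=\Omega/2-\eta^{2}$ and $\varepsilon=-\Omega/2$. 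At these points the resolvent poles of Lemma~\ref{lem:ess}-(i) degenerate ($p_{1}=p_{2}$ at $\lambda_{0}$, and $p_{1}=0$ on the contour at $-\Omega/2$), so there $\lambda(\varepsilon)$ meets the bottom of $\ess(A)$ and is not an isolated eigenvalue; this justifies the removal in (i).

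The classification (ii)--(vii) is then read off the parabola by comparing $\lambda(\varepsilon)$ with $J(\eta,\Omega)$ in the two regimes. For $\Omega>\eta^{2}$ one has $J=-\Omega/2$, and the root $\varepsilon=\Omega/2-\eta^{2}$ splits the admissible interval: for $-\Omega/2<\varepsilon<\Omega/2-\eta^{2}$ one gets $\lambda(\varepsilon)<-\Omega/2$, placing the state below the essential spectrum (item (iii), $\sigma_{<}(A)$), while for $\Omega/2-\eta^{2}<\varepsilon<\Omega/2$ one gets $\lambda(\varepsilon)>-\Omega/2$, embedding it (item (iv), $\sigma_{1}(A)$), which gives (vii). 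For $0<\Omega\leq\eta^{2}$ one has $J=\lambda_{0}$, the global minimum of $\lambda(\varepsilon)$, so $\lambda(\varepsilon)\geq J$ for every admissible $\varepsilon$ with equality only at the excluded vertex; thus the whole induced family is embedded (items (v)--(vi), $\sigma_{2}(A)$). Collecting the pieces yields $\sigma_{>}(A)=\sigma_{1}(A)\cup\sigma_{2}(A)$ and $\so(A)=\sigma_{<}(A)\cup\sigma_{>}(A)$, which is (ii). Finally, (viii) follows from the equivalence of the representations: $\disc(B)=\disc(A)$ with kernels in correspondence, so $\so(B)=\so(A)$, the classes $\mrm{ker}(\lambda(\varepsilon)\ot I-B)$ inherit the form of Theorem~\ref{thm:A0}-(iii), and the $A$-eigenfunctions are the $A_{0}$-eigenfunctions of Theorem~\ref{thm:A0}-(ii).

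The step I expect to be the main obstacle is the threshold bookkeeping: rigorously separating genuinely isolated induced eigenvalues (in $\disc(A)$) from those embedded in, or sitting exactly at the bottom of, $\ess(A)$. This requires the careful comparison of the parabola $\lambda(\varepsilon)$ with the two-branch threshold $J(\eta,\Omega)$ across $\Omega>\eta^{2}$ and $\Omega\leq\eta^{2}$, together with verifying that the excluded vertex and level-set values genuinely correspond to pole degeneracies of the $A$-resolvent rather than to admissible bound states.
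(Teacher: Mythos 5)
Your treatment of (i)--(vii) is sound and in fact streamlines the paper's argument. Where the paper substitutes the explicit eigenfunctions of Theorem~\ref{thm:A0}-(ii) into the eigenvalue equation for $A$ and solves the resulting $2\times2$ linear system (with the case split $f_{1}(0)=0$ or $f_{2}(0)=0$) to extract $\lambda(\varepsilon)=\varepsilon-\omega^{2}$, you read the same formula off the decomposition $H_{0}=-\Delta\ot I+U$: the $A_{0}$-eigenfunctions are piecewise multiples of $e^{\mp\omega x}$, so $-\Delta f=-\omega^{2}f$ on $\RO$, and since $D(A)$ and $D(A_{0})$ carry literally the same boundary condition, these piecewise-exponential decaying functions lie in $D(A)$ and are automatically $A$-eigenfunctions with eigenvalue $\varepsilon-\omega^{2}$. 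That is a genuine economy over the paper's computation. Your classification (ii)--(vii), comparing the convex parabola $\lambda(\varepsilon)=\varepsilon^{2}/\eta^{2}+\varepsilon-\Omega^{2}/(4\eta^{2})$ with the two-branch threshold $J(\eta,\Omega)$ of Lemma~\ref{lem:ess}-(ii), is exactly the paper's argument. One inaccuracy in the exclusion step: at $\varepsilon=-\eta^{2}/2$ (which is admissible only when $\Omega>\eta^{2}$) the value $\lambda_{0}$ lies \emph{strictly below} $\inf\ess(A)=-\Omega/2$, so this point does not ``meet the bottom of $\ess(A)$''; its removal rests solely on the pole degeneracy $p_{1}^{2}=p_{2}^{2}$ of the resolvent in Lemma~\ref{lem:ess}-(i) (the paper's ``resonant state''), which you also cite --- a slip in the justification, not in the result.

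The genuine gap is item (viii). You derive it from ``the equivalence of the representations: $\disc(B)=\disc(A)$ with kernels in correspondence'', but no such equivalence is available at this stage: it is established only in Theorem~\ref{thm:A}, whose proof itself invokes Theorem~\ref{thm:EA}-(viii), so your argument is circular within the paper's logical structure. Indeed, the paper's introduction stresses that weak solutions of $B$ agree with classical solutions of $A$ only subject to verification of conditions on the normalization and the eigenfunction. Accordingly, the paper proves (viii) by a direct computation: it takes the weak solutions of Theorem~\ref{thm:A0}-(iii), integrates $(B-\lambda(\varepsilon)\ot I)f$ over $\R$ (using $\int_{-\infty}^{\infty}f(x)dx=(2/\omega)f(0)$ and the explicit action of $H_{0}$), and arrives at the condition $\bigl((\Omega/2)\ot\sigma_{3}+(\gamma\omega/2-\lambda(\varepsilon))\ot I\bigr)f(0)=0$, which forces $f_{1}(0)=0$ or $f_{2}(0)=0$ and $\lambda(\varepsilon)=(\gamma\omega\pm\Omega)/2$, recovering $\so(A)$. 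Some such computation on the $B$-side is indispensable; as written, your (viii) is asserted rather than proved.
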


\begin{proof}
The proof is essentially based on the combination of Theorem~\ref{thm:A0} with Lemmas~\ref{lem:commutator}--\ref{lem:ess}.

(i) In agreement with Lemma~\ref{lem:commutator}-(1), and in particular (\ref{eq:RAA0}), substitute 
$f\in\mrm{ker}(\varepsilon\ot  I-A_{0})$ (refer to Theorem~\ref{thm:A0}-(ii)) in $\mrm{ker}(\lambda(\varepsilon)\ot  I-A)$ for some 
$\lambda(\varepsilon)\in\R$. Then

\begin{align*}
0=&
f_{1}(0)\left(-\omega^{2}+\frac{\Omega}{2}-\lambda(\varepsilon)-\omega\eta\sqrt{\frac{\Omega-2\varepsilon}{\Omega+2\varepsilon}} \right)
\\
&+
f_{2}(0)
\left(\mp\left(-\omega^{2}+\frac{\Omega}{2}-\lambda(\varepsilon)\right)\sqrt{\frac{\Omega+2\varepsilon}{\Omega-2\varepsilon}}\pm
\omega\eta \right), \\
0=&
f_{1}(0)
\left(\mp\left(-\omega^{2}-\frac{\Omega}{2}-\lambda(\varepsilon)\right)\sqrt{\frac{\Omega-2\varepsilon}{\Omega+2\varepsilon}}\mp
\omega\eta \right) \\
&+
f_{2}(0)\left(-\omega^{2}-\frac{\Omega}{2}-\lambda(\varepsilon)+\omega\eta\sqrt{\frac{\Omega+2\varepsilon}{\Omega-2\varepsilon}} \right)
\end{align*}

\noindent{}($\omega$ as in Theorem~\ref{thm:A0}), where the upper sign corresponds to $x>0$, and the lower one to $x<0$.
It appears from above that for either $f_{2}(0)=0$ or $f_{1}(0)=0$, the following holds,

\begin{align*}
0=&-\omega^{2}+\frac{\Omega}{2}-\lambda(\varepsilon)-\omega\eta\sqrt{\frac{\Omega-2\varepsilon}{\Omega+2\varepsilon}}, \\
0=&-\omega^{2}-\frac{\Omega}{2}-\lambda(\varepsilon)+\omega\eta\sqrt{\frac{\Omega+2\varepsilon}{\Omega+2\varepsilon}}.
\end{align*}

\noindent{}The solution $\lambda(\varepsilon)$ satisfying the above system of equations is given by 
$\lambda(\varepsilon)=\varepsilon-\omega^{2}$ or explicitly, $\varepsilon-(\Omega^{2}-4\varepsilon^{2})/(4\eta^{2})$.

In order to accomplish the proof of (i), it remains to establish valid eigenvalues $\varepsilon$ from $\disc(A_{0})$ thus 
generating proper eigenvalues $\lambda(\varepsilon)$ from $\so(A)$.

By a straightforward inspection, $\lambda_{0}\leq\lambda(\varepsilon)<\Omega/2$ for all $\Omega,\eta>0$, where $\lambda_{0}$ is as in
Lemma~\ref{lem:ess}-(ii). The lower bound is obtained at $\varepsilon=-\eta^{2}/2$ (the solution to $d\lambda(\varepsilon)/d
\varepsilon=0$). On the other hand, $\lambda_{0}\leq-\Omega/2$ and $\lambda(\varepsilon)=-\Omega/2$ at $\varepsilon=\Omega/2-\eta^{2}$
($\varepsilon=-\Omega/2$ is improper due to Theorem~\ref{thm:A0}-(ii)). Therefore, the points $\varepsilon=-\eta^{2}/2$ and
$\Omega/2-\eta^{2}$, which hold whenever $\Omega>\eta^{2}>0$, must be excluded as the resonant states, by Theorem~\ref{thm:A0}-(i)
(inspect solutions to $\omega_{z}=0$ with respect to $z$ given by $\pm\Omega/2$) and by Lemma~\ref{lem:ess}-(i) (inspect solutions to 
$p_{1}^{2}=p_{2}^{2}$ with respect to $z$ given by $\lambda_{0}$, and solutions to $p_{j}=0$, $j=1,2$, given by $\pm\Omega/2$).
Item (i) holds.

(ii)--(v) The reason for extracting $\so(A)$ into subsets is in different behavior of the involved eigenvalues:
$\sup\sigma_{<}(A)=\inf\ess(A)$ and $\inf\sigma_{>}(A)=\inf\ess(A)$. This is easy to verify by considering $\lambda(\varepsilon)$
and $J(\eta,\Omega)$: For $0<\Omega\leq\eta^{2}$, one finds that $\lambda(\varepsilon)>J(\eta,\Omega)$, which is $\sigma_{2}(A)$.
For $\Omega>\eta^{2}>0$, $\lambda(\varepsilon)<J(\eta,\Omega)$ for $-\Omega/2<\varepsilon<\Omega/2-\eta^{2}$, thus yielding
$\sigma_{<}(A)$, and $\lambda(\varepsilon)>J(\eta,\Omega)$ for $\Omega/2-\eta^{2}<\varepsilon<\Omega/2$, thus yielding
$\sigma_{1}(A)$. The values $\lambda(\varepsilon)=J(\eta,\Omega)$ are excluded due to the previous discussion (these are resonant states).

(vi) Since $J(\eta,\Omega)=\lambda_{0}$ for $0<\Omega\leq\eta^{2}$, we have that $\so(A)=\sigma_{2}(A)$ in this regime. But
$\inf\sigma_{2}(A)=\inf\ess(A)$, and hence (vi) holds.

(vii) For $\Omega>\eta^{2}>0$, $J(\eta,\Omega)=-\Omega/2$. In the present regime we have that $\so(A)=\sigma_{1}(A)$ with
$\inf\sigma_{1}(A)=-\Omega/2$. This gives (vii).

(viii) Following Lemma~\ref{lem:commutator}-(2), we need to show that (weak) solutions in $\mrm{ker}(\lambda(\varepsilon)\ot  I-B)$
yield eigenvalues $\lambda(\varepsilon)\in\so(B)=\so(A)$. By Theorem~\ref{thm:A0}-(iii),

\begin{subequations}\label{eq:12}
\begin{equation}
0=\int_{-\infty}^{\infty}(B-\lambda(\varepsilon)\ot  I)f(x)dx=\int_{-\infty}^{\infty}(H_{0}f)(x)dx+
(\gamma-2\lambda(\varepsilon)/\omega)f(0),
\end{equation}

\noindent{}where we have explored the integral $\int_{-\infty}^{\infty}f(x)dx=(2/\omega)f(0)$ for $\omega>0$
(recall $f\in L^{1}(\R)^{2}$ in Theorem~\ref{thm:A0}-(i) and Lemma~\ref{lem:ess}-(i)). But

\begin{align}
\int_{-\infty}^{\infty}(H_{0}f)(x)dx=&
-\int_{-\infty}^{\infty}f^{\prime\prime}(x)dx-(i\eta\ot \sigma_{2})\int_{-\infty}^{\infty}f^{\prime}(x)dx
 \nonumber \\
&+((\Omega/2)\ot \sigma_{3})\int_{-\infty}^{\infty}f(x)dx=((\Omega/\omega)\ot \sigma_{3})f(0),
\end{align}
\end{subequations}

\noindent{}and hence the combination of (\ref{eq:12}) yields

\begin{equation}
((\Omega/2)\ot \sigma_{3}+(\gamma\omega/2-\lambda(\varepsilon))\ot  I)f(0)=0.
\label{eq:3}
\end{equation}

\noindent{}Equation~(\ref{eq:3}) has solutions with respect to $\lambda(\varepsilon)\in\R$ only if either $f_{2}(0)=0$ or
$f_{1}(0)=0$ (recall Theorem~\ref{thm:A0}). Then it holds $\lambda(\varepsilon)=(\gamma\omega\pm\Omega)/2$, where the upper sign is for 
$f_{2}(0)=0$, and the lower one for $f_{1}(0)=0$. Recalling that $\omega=\sqrt{(\Omega/2)^{2}-\varepsilon^{2}}/\eta$, we recover 
$\so(A)$. This accomplishes the proof of the theorem.
\end{proof}

The points in $\so(A)\subset\disc(A)$ are illustrated in Fig.~\ref{fig:EigenvaluesEA}.

\begin{figure}[htp]
\centering
\includegraphics[scale=0.50]{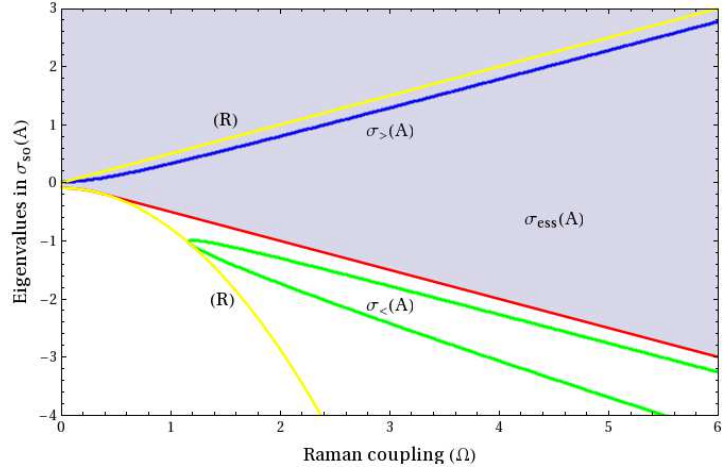}
\caption{ 
(Color online) Computed spin-orbit coupling induced states $\so(A)\subset\disc(A)$ (refer to Theorem~\ref{thm:EA}) for the 
point-interaction strength $\gamma=-1$ and the spin-orbit-coupling strength $\eta=0.6$ (in $\hbar=c=1$ units). In figure, red line 
shows the border $\inf\ess(A)$ of the essential spectrum of $A$ (Lemma~\ref{lem:ess}). The eigenvalues $\lambda(\varepsilon)\in 
\so(A)$ ($\varepsilon\in\disc(A_{0})$), as functions of the Raman coupling $\Omega>0$, are drawn by the blue ($\sigma_{>}(A)$) and green 
($\sigma_{<}(A)$) lines. Resonant states of $A$ are drawn by yellow curves (R).
}
\label{fig:EigenvaluesEA}
\end{figure}

\section{Discrete spectrum}\label{sec:spectrum}

As yet, we have established the part of $\disc(A)$ which is associated with discontinuous eigenfunctions at $x=0$. These states
originate from the property that $A$ commutes with $A_{0}$, where $A_{0}/\eta$ ($\eta>0$) is unitarily equivalent to the one-dimensional
Dirac operator for the particle in Fermi pseudopotential.

In this section, our main goal is to determine the remaining part of $\disc(A)$, namely, $\disc(A)\backslash\so(A)$, thus
recovering all discrete states of the spin-orbit Hamiltonian, and to show that the associated eigenfunctions are continuous in the whole 
$\R$.

\begin{thm}\label{thm:A}
Let $A$ and $B$ be as in (\ref{eq:A}) and (\ref{eq:B}), respectively. Then:

\begin{enumerate}
\item 

\begin{align*}
\disc(A)=&\disc(B)=
\bigl\{\lambda<-\Omega/2\co 2p_{1}p_{2}(p_{1}+p_{2})+i\gamma(p_{1}p_{2}+\lambda\pm\Omega/2)=0; \\
&\lambda\neq\lambda_{0};\Omega\geq0;\eta>0;\gamma<0;\mrm{Im}\:p_{j}>0;j=1,2 \bigr\}\bigcup \so(A),
\end{align*}

\noindent{}where $\so(A)$ is given in Theorem~\ref{thm:EA}, the $p_{j}$ ($j=1,2$) and $\lambda_{0}$ are as in Lemma~\ref{lem:ess}, 
with $s_{1}=+1$, $s_{2}=\pm1$, $z\equiv\lambda$; 
\item The equivalence classes of functions from $\mrm{ker}(\lambda\ot  I-B)$ (with $\lambda\in\disc(B)\backslash\so(B)$) are of the 
form $-\gamma(A^{0}-\lambda\ot  I)^{-1}(x)f(0)$ (with $x\in\RO;\gamma<0$), with the integral kernel, for $z\equiv\lambda$, as in 
Lemma~\ref{lem:ess}-(i);

\item The (strict) solutions $\mrm{ker}(\lambda\ot  I-A)$ associated with $\lambda$ from $\disc(A)\backslash\so(A)$ are of the form: 
	\begin{enumerate}
	\item For $\lambda\in \disc(A)\backslash\so(A)$ with the upper sign,

\begin{subequations}\label{eq:AAAA}
\begin{align}
f(x)=&C\left[\frac{e^{ip_{1}x}}{p_{1}}\left(\begin{matrix}\lambda+\Omega/2-p_{1}^{2} \\
i\eta p_{1}\end{matrix}\right)-\frac{e^{ip_{2}x}}{p_{2}}\left(\begin{matrix}\lambda+\Omega/2-p_{2}^{2} \\
i\eta p_{2}\end{matrix}\right) \right]\quad(x>0), \label{eq:AAAA-a} \\
=&C\left[\frac{e^{-ip_{1}x}}{p_{1}}\left(\begin{matrix}\lambda+\Omega/2-p_{1}^{2} \\
-i\eta p_{1}\end{matrix}\right)-\frac{e^{-ip_{2}x}}{p_{2}}\left(\begin{matrix}\lambda+\Omega/2-p_{2}^{2} \\
-i\eta p_{2}\end{matrix}\right) \right]\quad(x<0) \label{eq:AAAA-b}
\end{align}
\end{subequations}

\noindent{}for any $C\in\C\backslash\{0\}$, $\eta>0$;	
	\item For $\lambda\in \disc(A)\backslash\so(A)$ with the lower sign,

\begin{subequations}\label{eq:AAAA1}
\begin{align}
f(x)=&C\left[\frac{e^{ip_{1}x}}{\lambda+\Omega/2-p_{1}^{2}}\left(\begin{matrix}\lambda+\Omega/2-p_{1}^{2} \\
i\eta p_{1}\end{matrix}\right)-\frac{e^{ip_{2}x}}{\lambda+\Omega/2-p_{2}^{2}}\left(\begin{matrix}\lambda+\Omega/2-p_{2}^{2} \\
i\eta p_{2}\end{matrix}\right) \right] \nonumber \\
&(x>0), \label{eq:AAAA1-a} \\
=&C\left[-\frac{e^{-ip_{1}x}}{\lambda+\Omega/2-p_{1}^{2}}\left(\begin{matrix}\lambda+\Omega/2-p_{1}^{2} \\
-i\eta p_{1}\end{matrix}\right)+\frac{e^{-ip_{2}x}}{\lambda+\Omega/2-p_{2}^{2}}\left(\begin{matrix}\lambda+\Omega/2-p_{2}^{2} \\
-i\eta p_{2}\end{matrix}\right) \right] \nonumber \\
&(x<0) \label{eq:AAAA1-b}
\end{align}
\end{subequations}

\noindent{}for any $C\in\C\backslash\{0\}$, $\eta>0$;	
	\item For $\eta=0$, we have that the discrete spectrum is given by the union
	$\disc(A)\backslash\so(A)=\disc(A)=\bigl\{-\gamma^{2}/4\pm\Omega/2\co \gamma<-2\sqrt{\Omega}\bigr\}\cup\bigl\{
	-\gamma^{2}/4-\Omega/2\co -2\sqrt{\Omega}<\gamma<0\bigr\}$; the associated 
	eigenfunctions are $C\chi_{\pm}e^{\gamma\vert x\vert/2}$, with $\sigma_{3}\chi_{\pm}=\pm\chi_{\pm}$ 
	($x\in\RO;C\in\C\backslash\{0\};\Omega\geq0;\gamma<0$);
	\end{enumerate}
\item There are no eigenvalues from $\disc(A)\backslash\so(A)$ embedded into the essential spectrum of $A$:
$(\disc(A)\backslash\so(A))\cap\ess(A)=\varnothing$.
\end{enumerate}
\end{thm}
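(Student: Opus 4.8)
The whole argument rests on the resolvent kernel of Lemma~\ref{lem:ess}-(i), whose structure already encodes both the eigenvalues and the eigenfunctions. My plan for (1) is to locate $\disc(A)$ as the poles of $(R_z(A)f)(x)$, i.e.\ the zeros of the scalar denominator $(2p_1p_2(p_1+p_2)+i\gamma(p_1p_2+z))^2+(\gamma\Omega/2)^2$ in Lemma~\ref{lem:ess}-(i). Factoring this sum of squares over $\C$ as $(X+i\gamma\Omega/2)(X-i\gamma\Omega/2)$, with $X=2p_1p_2(p_1+p_2)+i\gamma(p_1p_2+z)$, yields the two linear branches
\[
2p_1p_2(p_1+p_2)+i\gamma\bigl(p_1p_2+z\pm\Omega/2\bigr)=0,
\]
which is precisely the defining relation of (1) upon setting $z\equiv\lambda$. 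The admissibility constraints ($\gamma<0$ so that a genuine well is present, $\mrm{Im}\:p_j>0$ so that the candidate eigenfunction decays, and $\lambda\neq\lambda_0$ to avoid the coalescence $p_1=p_2$) are read off from the same kernel. I then adjoin the spin--orbit branch $\so(A)$ supplied by Theorem~\ref{thm:EA}, and since the kernel of Lemma~\ref{lem:ess}-(i) represents both $A$ and $B$, the eigenvalue sets coincide, $\disc(A)=\disc(B)$.

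For (2) I would take the Fourier transform of $Bf=\lambda f$ exactly as in the derivation of (\ref{eq:fp}) and Theorem~\ref{thm:A0}-(iii): because $\widehat{(Bf)}(p)=\hat{H}_{0}(p)\hat f(p)+\gamma f(0)$, one obtains $\hat f(p)=-\gamma\,\hat{G}^{0}(p;\lambda)f(0)$ with $\hat{G}^{0}$ as in (\ref{eq:Green2}), and inverting the transform gives the claimed representative $f(x)=-\gamma\,(A^{0}-\lambda\ot I)^{-1}(x)f(0)$. Imposing self-consistency on $f(0)$ (computing $f(0)$ from the right-hand side via the symmetric-limit definition in (\ref{eq:V-delta-1})) reduces to $(I\ot I+\gamma(A^{0}-\lambda\ot I)^{-1}(0))f(0)=0$; the vanishing of its determinant reproduces the branch equation of (1), while the $2\times2$ structure forces $f(0)$ into one of two directions, namely $f_1(0)=0$ or $f_2(0)=0$, matching the two signs.

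For (3) the strict eigenfunctions are built by hand: on each half-line one solves $H_0f=\lambda f$, whose characteristic roots are $\pm ip_1,\pm ip_2$, and keeps only the decaying exponentials ($\mrm{Im}\:p_j>0$). The associated amplitudes are the kernel vectors of $\hat{H}_{0}(p_j)-\lambda$, namely $\bigl(\lambda+\Omega/2-p_j^2,\,i\eta p_j\bigr)^{\top}$ for $x>0$ and its reflection $p_j\mapsto-p_j$ for $x<0$. Feeding these into the boundary condition of (\ref{eq:A}) and using that a continuous candidate kills the term $(i\eta\ot\sigma_2)(f(0_+)-f(0_-))$, the condition collapses to the ordinary $\delta$-matching $\gamma f(0)=f'(0_+)-f'(0_-)$; solving the resulting linear system splits into the case $f_2(0)=0$, giving (\ref{eq:AAAA}), and $f_1(0)=0$, giving (\ref{eq:AAAA1}), in one-to-one correspondence with the two signs of (1). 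The degenerate case $\eta=0$ is treated separately: there $H_0$ decouples along the eigenspaces of $\sigma_3$ into two scalar operators $-\Delta+\gamma\delta\pm\Omega/2$, each an attractive $\delta$-well with the single bound state $-\gamma^2/4$, so the eigenvalues are $-\gamma^2/4\pm\Omega/2$ and survive in $\disc(A)$ only when strictly below $\inf\ess(A)=-\Omega/2$, which is the stated $\gamma$-threshold. The identification of the weak ($B$) and strict ($A$) descriptions then follows from \citealt[Sec.~V.4]{Reed80} together with Lemma~\ref{lem:commutator}.

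Statement (4) I expect to be the crux. The defining set of (1) carries $\mrm{Im}\:p_j>0$ for \emph{both} $j=1,2$; tracing $p_{1,2}=s_{1,2}\sqrt{\lambda+\eta^2/2\pm\tfrac12\sqrt{\eta^2(\eta^2+4\lambda)+\Omega^2}}$ shows that this double-decay requirement can hold only for $\lambda<J(\eta,\Omega)=\inf\ess(A)$, since as $\lambda$ rises to the threshold one of the $p_j^2$ crosses into $[0,\infty)$, i.e.\ $\mrm{Im}\:p_j\to0$, and square integrability is lost. Hence every $\lambda\in\disc(A)\backslash\so(A)$ lies strictly below $\inf\ess(A)$, so $(\disc(A)\backslash\so(A))\cap\ess(A)=\varnothing$. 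The delicate point, where I would spend the most care, is to confirm that the nominal labeling constraint $\lambda<-\Omega/2$ is genuinely sharpened to $\lambda<J(\eta,\Omega)$ by the condition $\mrm{Im}\:p_j>0$ in the regime $\Omega\leq\eta^2$ (where $J=\lambda_0<-\Omega/2$), and that no spurious root of the branch equations slips into $[J(\eta,\Omega),-\Omega/2)$; this I would settle by comparing the root locations with the dispersion (\ref{eq:dispersion}).
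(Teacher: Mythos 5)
Your proposal is correct and reaches the theorem by a genuinely leaner route than the paper. Items (2) and (3c) coincide with the paper's own argument (weak solutions via the Fourier transform and the consistency condition $(I\ot I+\gamma(A^{0}-\lambda\ot I)^{-1}(0))f(0)=0$; decoupling at $\eta=0$). The real difference is in how the eigenvalue equations and the strict eigenfunctions are produced. You factor the denominator of $R_{z}(A)$ from Lemma~\ref{lem:ess}-(i) into the two linear branches, and for item (3) you parametrize each decaying exponential from the outset by the kernel vector $(\lambda+\Omega/2-p_{j}^{2},\,i\eta p_{j})^{\top}$ of $\hat{H}_{0}(\pm p_{j})-\lambda\ot I$, so the amplitude ratios are fixed by linear relations and the boundary condition splits cleanly into the alternatives $f_{2}(0)=0$ and $f_{1}(0)=0$. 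The paper instead derives the amplitude ratios $Y_{j}^{(s)}$ from a quadratic identity, which introduces spurious sign choices $b_{j}=\pm1$ and forces the sixteen distributions of Tab.~\ref{tab:1}, the invariance analysis with $E_{\Omega}$, $F_{\Omega}$, $G_{\Omega}$ and the involution $\varphi$, and the reduction to class (IV), before arriving at the cubic (\ref{eq:lambdaobt}); only afterwards, in its Cases (1)--(4), does it carry out what is essentially your computation. Your route avoids all of that bookkeeping; what the paper's detour buys is the explicit cubic (\ref{eq:lambdaobt}), which it reuses in Remark~\ref{rem:Fig} and in its proof of item (4) for $\lambda>\Omega/2$. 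Two remarks. First, like the paper (whose Eq.~(\ref{eq:F1}) is derived ``provided $f(0_{+})=f(0_{-})$''), you restrict item (3) to continuous candidates and delegate discontinuous eigenfunctions to $\so(A)$; this is the same logical structure as the paper's, so nothing is lost relative to it, but it deserves to be stated explicitly. Second, your item (4) is only a sketch, but the proposed argument does close, and more cleanly than the paper's: for $\lambda\geq J(\eta,\Omega)$ the dispersion (\ref{eq:dispersion}) supplies a real root of $\Delta_{\lambda}(p)=(p^{2}-p_{1}^{2})(p^{2}-p_{2}^{2})$, forcing $\mrm{Im}\:p_{1}=0$ or $\mrm{Im}\:p_{2}=0$, in contradiction with the defining set of (1); the paper instead checks this regime by regime ($\lambda=\lambda_{0}+\nu$ for $0<\Omega\leq\eta^{2}$, $\lambda=-\Omega/2+\nu$ for $\Omega>\eta^{2}$, supplemented by a positivity argument on the cubic for $\nu>\Omega$).
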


\begin{rem}
(1) As is seen from the theorem, the eigenfunctions of $A$ and $B$, which correspond to the upper sign for $\lambda$ in 
$\disc(A)\backslash\so(A)$, coincide if and only if

\begin{subequations}\label{eq:unique}
\begin{equation}
f_{1}(0)\equiv f_{1}(0_{+})=f_{1}(0_{-})=\frac{-2iC}{\gamma}(p_{1}^{2}-p_{2}^{2}),\quad
f_{2}(0)\equiv f_{2}(0_{+})=f_{2}(0_{-})=0
\end{equation}

\noindent{}($C\in\C\backslash\{0\};\gamma<0$). The eigenfunctions of $A$ and $B$, which correspond to the lower sign for $\lambda$ in 
$\disc(A)\backslash\so(A)$, coincide if and only if

\begin{equation}
f_{1}(0)\equiv f_{1}(0_{+})=f_{1}(0_{-})=0,\quad
f_{2}(0)\equiv f_{2}(0_{+})=f_{2}(0_{-})=\frac{2C}{\gamma\eta}(p_{1}^{2}-p_{2}^{2})
\end{equation}
\end{subequations}

\noindent{}($C\in\C\backslash\{0\};\gamma<0;\eta>0$). 

Therefore, equations~(\ref{eq:unique}) provide unique solutions (up to the constant $C$) for functions $f_{j}(0)$ ($j=1,2$) which are 
undetermined in $\mrm{ker}(\lambda\ot  I-B)$; see Theorem~\ref{thm:A}-(2).

(2) It is interesting to compare the eigenfunctions at $x=0$ (having the meaning as in (\ref{eq:V-delta-1})), which correspond to
the spin-orbit coupling induced states (Theorem~\ref{thm:EA}), with those given above. For $\lambda(\varepsilon)\in\so(A)$ with the
upper sign, $f_{2}(0_{+})=-f_{2}(0_{-})$ yields $f_{2}(0)=0$; in comparison, $f_{2}(0)\equiv f_{2}(0_{+})=f_{2}(0_{-})=0$ for
$\lambda\in\disc(A)\backslash\so(A)$ with the upper sign. Hence in both cases, the <<total>> lower component $f_{2}(0)=0$. Similarly, 
there is also another case but with the upper component $f_{1}(0)=0$.

(3) As in Theorem~\ref{thm:EA}, the eigenvalues $\lambda$ in $\disc(A)\backslash\so(A)$ can be written in an explicit form by
solving the cubic equation. We chose not to do that, but displayed $\lambda$ graphically instead; see Fig.~\ref{fig:EigenvaluesA}.
\end{rem}

\begin{figure}[htp]
\centering
\includegraphics[scale=0.50]{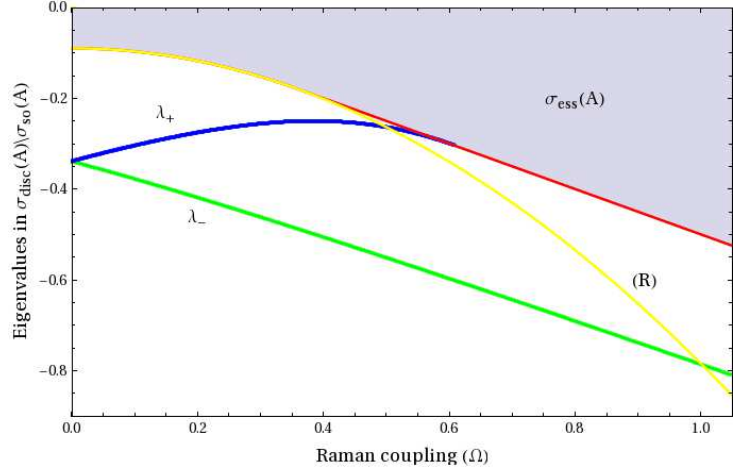}
\caption{ 
(Color online) The eigenvalues of $A$ associated with everywhere continuous eigenfunctions.
The point-interaction strength $\gamma=-1$ and the spin-orbit-coupling strength $\eta=0.6$ (in $\hbar=c=1$ units). In figure, red 
line shows the border $\inf\ess(A)$ of the essential spectrum of $A$ (Lemma~\ref{lem:ess}). The blue $\lambda_{+}$ (green 
$\lambda_{-}$) line, showing the bound state as a function of the Raman coupling $\Omega\geq0$, corresponds to the eigenfunction with a 
zero-valued lower (upper) component at the origin $x=0$ (Theorem~\ref{thm:A}). The eigenvalue $\lambda_{+}$ approaches 
$\inf\ess(A)=-\Omega/2$ at $\Omega=\eta^{2}+\gamma^{2}/4$ and then disappears (for details, refer to Remark~\ref{rem:Fig}). 
Resonant states of $A$ are drawn by the yellow curve (R).
}
\label{fig:EigenvaluesA}
\end{figure}

\begin{proof}[Proof of Theorem~\ref{thm:A}]
First off, we note that, for $\lambda\in\disc(A)$, $\lambda\neq\lambda_{0}$ due to Lemma~\ref{lem:ess}-(i). Next, combining
(\ref{eq:fp}) with Lemma~\ref{lem:ess}-(i) we immediately infer (see also the proof of Lemma~\ref{lem:ess}-(i) and in particular
(\ref{eq:Green2})) item (2) of the theorem. But then, it holds $f(0_{+})=f(0_{-})\equiv f(0)$. By solving
$(I\ot  I+\gamma(A^{0}-\lambda\ot  I)^{-1}(0))f(0)=0$, we recover $\disc(B)\backslash\so(A)$ ($\so(B)=\so(A)$ by 
Theorem~\ref{thm:EA}-(viii)).

In order to accomplish the proof of (1), it therefore remains to establish $\mrm{ker}(\lambda\ot  I-A)$ 
($\lambda\in\disc(A)\backslash\so(A)$) thus proving that items (3a)--(3b) yield $\disc(A)=\disc(B)$, which in turn is found by
computing the poles of $R_{z}(A)$ in Lemma~\ref{lem:ess}-(i).

We solve the characteristic equation for $H_{0}f=\lambda f$; see (\ref{eq:kss2}). Then

\begin{align}
f(x)=&\left(\begin{matrix}c_{1} \\ c_{3} \end{matrix}\right)e^{k_{1}x}+\left(\begin{matrix}c_{2} \\ c_{4} \end{matrix}\right)e^{k_{2}x}
\quad(x>0;c_{1},\ldots,c_{4}\in\C;\mrm{Re}\:k_{j}<0;j=1,2), \nonumber \\
=&\left(\begin{matrix}\tilde{c}_{1} \\ \tilde{c}_{3} \end{matrix}\right)e^{-k_{1}x}+
\left(\begin{matrix}\tilde{c}_{2} \\ \tilde{c}_{4} \end{matrix}\right)e^{-k_{2}x}
\quad(x<0;\tilde{c}_{1},\ldots,\tilde{c}_{4}\in\C;\mrm{Re}\:k_{j}<0;j=1,2), \label{eq:fxstart}
\end{align}

\noindent{}where

\begin{equation}
k_{ss^{\prime}}=s^{\prime}\sqrt{-\lambda-\eta^{2}/2+is\eta\sqrt{\lambda_{0}-\lambda} }\quad
(\lambda_{0}=-(\eta^{2}+(\Omega/\eta)^{2})/4)
\label{eq:kssp}
\end{equation}

\noindent{}($k_{1}\equiv k_{+-};k_{2}\equiv k_{-,s^{\prime}};s,s^{\prime}=\pm1;\eta>0$). The condition $\mrm{Re}\:k_{j}<0$ ($j=1,2$) is 
due to $f\in D(A)$ (recall (\ref{eq:A})). The boundary condition in $D(A)$, provided $f(0_{+})=f(0_{-})$, yields

\begin{equation}
\left(\begin{matrix}c_{1}+c_{2} \\ c_{3}+c_{4} \end{matrix}\right)=
\left(\begin{matrix}\tilde{c}_{1}+\tilde{c}_{2} \\ \tilde{c}_{3}+\tilde{c}_{4} \end{matrix}\right),\quad
\gamma
\left(\begin{matrix}c_{1}+c_{2} \\ c_{3}+c_{4} \end{matrix}\right)=
\left(\begin{matrix}k_{1}(c_{1}+\tilde{c}_{1})+k_{2}(c_{2}+\tilde{c}_{2}) \\ 
k_{1}(c_{3}+\tilde{c}_{3})+k_{2}(c_{4}+\tilde{c}_{4}) \end{matrix}\right).
\label{eq:F1}
\end{equation}

\noindent{}We now substitute obtained functions $f$ in $H_{0}f=\lambda f$ and find that

\begin{align}
&
c_{1}(k_{1}^{2}+\lambda-\Omega/2)+c_{3}\eta k_{1}=0,\quad
c_{2}(k_{2}^{2}+\lambda-\Omega/2)+c_{4}\eta k_{2}=0, \nonumber \\
&
c_{3}(k_{1}^{2}+\lambda+\Omega/2)-c_{1}\eta k_{1}=0,\quad
c_{4}(k_{2}^{2}+\lambda+\Omega/2)-c_{2}\eta k_{2}=0, \nonumber \\
&
\tilde{c}_{1}(k_{1}^{2}+\lambda-\Omega/2)-\tilde{c}_{3}\eta k_{1}=0,\quad
\tilde{c}_{2}(k_{2}^{2}+\lambda-\Omega/2)-\tilde{c}_{4}\eta k_{2}=0, \nonumber \\
&
\tilde{c}_{3}(k_{1}^{2}+\lambda+\Omega/2)+\tilde{c}_{1}\eta k_{1}=0,\quad
\tilde{c}_{4}(k_{2}^{2}+\lambda+\Omega/2)+\tilde{c}_{2}\eta k_{2}=0. \label{eq:F2}
\end{align}

\noindent{}We need to solve the system of equations (\ref{eq:F1})--(\ref{eq:F2}). In particular, one finds from (\ref{eq:F2}),

\begin{equation}
c_{3}=c_{1}Y_{1}^{(1)},\quad c_{4}=c_{2}Y_{2}^{(2)},\quad \tilde{c}_{3}=\tilde{c}_{1}Y_{3}^{(1)},\quad
\tilde{c}_{4}=\tilde{c}_{2}Y_{4}^{(2)},
\label{eq:c34}
\end{equation}

\noindent{}where

\begin{equation}
Y_{j}^{(s)}=\frac{a_{j}\Omega+b_{j}\sqrt{\Omega^{2}-(2\eta k_{s})^{2}} }{2\eta k_{s}}\quad
(j=1,\ldots,4; s=1,2),
\label{eq:Yjs}
\end{equation}

\noindent{}and $a_{1}=a_{2}=+1$, $a_{3}=a_{4}=-1$, $b_{j}=\pm 1$ for all $j=1,\ldots,4$. Hence $Y_{j}^{(s)}=-ib_{j}$ for
$\Omega=0$.

For example, let $j=1$, $s=1$. From the first and third equations in (\ref{eq:F2}) one gets that

\begin{align*}
&\left\{\begin{array}{l}
c_{1}(k_{1}^{2}+\lambda-\Omega/2)+c_{3}\eta k_{1}=0, \\ \\
c_{3}(k_{1}^{2}+\lambda+\Omega/2)-c_{1}\eta k_{1}=0
\end{array}\right.\Longrightarrow 
\left\{\begin{array}{l}
c_{1}c_{3}(k_{1}^{2}+\lambda-\Omega/2)+c_{3}^{2}\eta k_{1}=0, \\ \\
c_{1}c_{3}(k_{1}^{2}+\lambda+\Omega/2)-c_{1}^{2}\eta k_{1}=0
\end{array}\right. \\
&\Longrightarrow
c_{1}c_{3}\Omega=\eta k_{1}(c_{1}^{2}+c_{3}^{2})\Longrightarrow c_{3}=c_{1}Y_{1}^{(1)},
\end{align*}

\noindent{}and similarly for the remaining $j=2,3,4$.

By (\ref{eq:c34})--(\ref{eq:Yjs}), there are $2^{4}=16$ possible solutions with respect to $a_{j}$ and $b_{j}$ for $j=1,\ldots,4$. These 
are tabulated in Tab.~\ref{tab:1}.

The number of distributions in Tab.~\ref{tab:1} must be reduced with the help of (\ref{eq:F1}). By (\ref{eq:F1}), one can express
$\tilde{c}_{j}$ in terms of $c_{j}$ ($j=1,\ldots,4$). Namely,

\begin{table}[htp!]
\centering
\caption{All possible solutions of (\ref{eq:F2}) with respect to $\{c_{3},c_{4},\tilde{c}_{3},\tilde{c}_{4}\}$ 
for $a_{j}$, $b_{j}=\pm 1$ for $j=1,\ldots,4$ given in (\ref{eq:c34})--(\ref{eq:Yjs}).}
\begin{tabular}{|rccccccccrrrr|}\hline
$N$  & $a_{1}$ & $b_{1}$ & $a_{2}$ & $b_{2}$ & $a_{3}$ & $b_{3}$ & $a_{4}$ & $b_{4}$ & $b_{1}-b_{3}$ & $b_{2}-b_{3}$ &
$b_{1}-b_{4}$ & $b_{2}-b_{4}$ \\ \hline\hline
$1$  & $+$ & $-$ & $+$ & $-$ & $-$ & $-$ & $-$ & $-$ & $0$ & $0$ & $0$ & $0$ \\
$2$  & $+$ & $-$ & $+$ & $-$ & $-$ & $-$ & $-$ & $+$ & $0$ & $0$ & $-2$ & $-2$ \\
$3$  & $+$ & $-$ & $+$ & $-$ & $-$ & $+$ & $-$ & $-$ & $-2$ & $-2$ & $0$ & $0$ \\
$4$  & $+$ & $-$ & $+$ & $-$ & $-$ & $+$ & $-$ & $+$ & $-2$ & $-2$ & $-2$ & $-2$ \\
$5$  & $+$ & $-$ & $+$ & $+$ & $-$ & $-$ & $-$ & $-$ & $0$ & $+2$ & $0$ & $+2$ \\
$6$  & $+$ & $-$ & $+$ & $+$ & $-$ & $-$ & $-$ & $+$ & $0$ & $+2$ & $-2$ & $0$ \\
$7$  & $+$ & $-$ & $+$ & $+$ & $-$ & $+$ & $-$ & $-$ & $-2$ & $0$ & $0$ & $+2$ \\
$8$  & $+$ & $-$ & $+$ & $+$ & $-$ & $+$ & $-$ & $+$ & $-2$ & $0$ & $-2$ & $0$ \\
$9$  & $+$ & $+$ & $+$ & $-$ & $-$ & $-$ & $-$ & $-$ & $+2$ & $0$ & $+2$ & $0$ \\
$10$ & $+$ & $+$ & $+$ & $-$ & $-$ & $-$ & $-$ & $+$ & $+2$ & $0$ & $0$ & $-2$ \\
$11$ & $+$ & $+$ & $+$ & $-$ & $-$ & $+$ & $-$ & $-$ & $0$ & $-2$ & $+2$ & $0$ \\
$12$ & $+$ & $+$ & $+$ & $-$ & $-$ & $+$ & $-$ & $+$ & $0$ & $-2$ & $0$ & $-2$ \\
$13$ & $+$ & $+$ & $+$ & $+$ & $-$ & $-$ & $-$ & $-$ & $+2$ & $+2$ & $+2$ & $+2$ \\
$14$ & $+$ & $+$ & $+$ & $+$ & $-$ & $-$ & $-$ & $+$ & $+2$ & $+2$ & $0$ & $0$ \\
$15$ & $+$ & $+$ & $+$ & $+$ & $-$ & $+$ & $-$ & $-$ & $0$ & $0$ & $+2$ & $+2$ \\
$16$ & $+$ & $+$ & $+$ & $+$ & $-$ & $+$ & $-$ & $+$ & $0$ & $0$ & $0$ & $0$ \\
\hline
\end{tabular}
\label{tab:1}
\end{table}

\begin{subequations}\label{eq:further}
\begin{align}
&\tilde{c}_{1}(k_{1}-k_{2})=c_{1}(\gamma-k_{1}-k_{2})+c_{2}(\gamma-2k_{2}), \nonumber \\
&\tilde{c}_{2}(k_{1}-k_{2})=c_{1}(2k_{1}-\gamma)+c_{2}(k_{1}+k_{2}-\gamma), \label{eq:further-a} \\
\intertext{and}
&\tilde{c}_{3}(k_{1}-k_{2})=c_{3}(\gamma-k_{1}-k_{2})+c_{4}(\gamma-2k_{2}), \nonumber \\
&\tilde{c}_{4}(k_{1}-k_{2})=c_{3}(2k_{1}-\gamma)+c_{4}(k_{1}+k_{2}-\gamma). \label{eq:further-b}
\end{align}
\end{subequations}

\noindent{}By (\ref{eq:c34}), substitute $\tilde{c}_{3}$, $c_{3}$ and $c_{4}$ in the first equation of (\ref{eq:further-b}) and get

$$
\tilde{c}_{1}Y_{3}^{(1)}(k_{1}-k_{2})=c_{1}Y_{1}^{(1)}(\gamma-k_{1}-k_{2})+c_{2}Y_{2}^{(2)}(\gamma-2k_{2}).
$$

\noindent{}Now multiply the first equation of (\ref{eq:further-a}) by $Y_{3}^{(1)}$ and subtract both obtained equations so that
$\tilde{c}_{1}$ is eliminated,

\begin{subequations}\label{eq:c12Y}
\begin{equation}
0=c_{1}(\gamma-k_{1}-k_{2})\bigl(Y_{1}^{(1)}-Y_{3}^{(1)}\bigr)+c_{2}(\gamma-2k_{2})\bigl(Y_{2}^{(2)}-Y_{3}^{(1)}\bigr).
\label{eq:c12Y-a}
\end{equation}

\noindent{}Similarly, by using (\ref{eq:c34}), substitute $\tilde{c}_{4}$, $c_{3}$ and $c_{4}$ in the second equation of 
(\ref{eq:further-b}) and get

$$
\tilde{c}_{2}Y_{4}^{(2)}(k_{1}-k_{2})=c_{1}Y_{1}^{(1)}(2k_{1}-\gamma)+c_{2}Y_{2}^{(2)}(k_{1}+k_{2}-\gamma).
$$

\noindent{}Multiply the second equation of (\ref{eq:further-a}) by $Y_{4}^{(2)}$ and subtract both obtained equations so that
$\tilde{c}_{2}$ is eliminated,

\begin{equation}
0=c_{1}(2k_{1}-\gamma)\bigl(Y_{1}^{(1)}-Y_{4}^{(2)}\bigr)+c_{2}(k_{1}+k_{2}-\gamma)\bigl(Y_{2}^{(2)}-Y_{4}^{(2)}\bigr).
\label{eq:c12Y-b}
\end{equation}
\end{subequations}

\noindent{}By using (\ref{eq:Yjs}), equations (\ref{eq:c12Y}) can be rewritten explicitly as follows

\begin{align*}
0=& c_{1}k_{2}(\gamma-k_{1}-k_{2})\bigl(2\Omega+(b_{1}-b_{3})\bigl(\Omega^{2}-(2\eta k_{1})^{2}\bigr)^{\frac{1}{2}} \bigr)  \\
&+c_{2}(\gamma-2k_{2})\bigl(\Omega(k_{1}+k_{2})+b_{2}k_{1}\bigl(\Omega^{2}-(2\eta k_{2})^{2}\bigr)^{\frac{1}{2}}  \\
&-b_{3}k_{2}\bigl(\Omega^{2}-(2\eta k_{1})^{2}\bigr)^{\frac{1}{2}} \bigr),  \\
\intertext{and}
0=& c_{1}(2k_{1}-\gamma)\bigl(\Omega(k_{1}+k_{2})+b_{1}k_{2}\bigl(\Omega^{2}-(2\eta k_{1})^{2}\bigr)^{\frac{1}{2}}  \\
&-b_{4}k_{1}\bigl(\Omega^{2}-(2\eta k_{2})^{2}\bigr)^{\frac{1}{2}} \bigr)+c_{2}k_{1}(k_{1}+k_{2}-\gamma)\bigl(2\Omega  \\
&+(b_{2}-b_{4})\bigl(\Omega^{2}-(2\eta k_{2})^{2}\bigr)^{\frac{1}{2}} \bigr). 
\end{align*}

\noindent{}By noting that $c_{1}$ and $c_{2}$ are two independent constants, we can subtract both equations and separate the expressions
at $c_{1}$ and $c_{2}$ one from another. Then

$$
E_{\Omega}(k_{1},k_{2})\equiv 0,\quad \varphi E_{\Omega}(k_{1},k_{2})\equiv 0,
$$

\noindent{}where

\begin{align*}
E_{\Omega}(k_{1},k_{2})=&\Omega[\gamma(k_{1}+3k_{2})-2(k_{1}+k_{2})^{2}]+b_{4}k_{1}(2k_{1}-\gamma)
[\Omega^{2}-(2\eta k_{2})^{2}]^{\frac{1}{2}} \\
&+k_{2}[b_{3}(k_{1}+k_{2}-\gamma)-b_{1}(3k_{1}+k_{2}-2\gamma)][\Omega^{2}-(2\eta k_{1})^{2}]^{\frac{1}{2}},
\end{align*}

\noindent{}with a one-to-one map $\varphi\co k_{1}\mapsto k_{2}$, $k_{2}\mapsto k_{1}$,
$b_{1}\mapsto b_{2}$, $b_{2}\mapsto b_{1}$, $b_{3}\mapsto b_{4}$, and $b_{4}\mapsto b_{3}$.
Then $\varphi^{n}=I$ (identity) for $n=0,2,4,\ldots$, and $\varphi^{n}=\varphi$ for 
$n=1,3,5,\ldots$ Equation $E_{0}\equiv0$ holds for the distributions (Tab.~\ref{tab:1}) numbered by $N=2$, $4$, $6$, $8$ and $9$, 
$11$, $13$, $15$. On the other hand, $E_{\Omega}$ with $\Omega>0$ is well defined for $N=2$, $6$ and $11$, $15$. Therefore, we deduce that
for $\Omega\geq0$, $E_{\Omega}$ makes sense if $N=2$, $6$ and $11$, $15$.

Expression $E_{\Omega}$ can be represented by the sum of $F_{\Omega}$ and $G_{\Omega}$, where both $F_{\Omega}$ and $G_{\Omega}$ are 
invariant under the action of $\varphi$, namely,

$$
F_{\Omega}(k_{1},k_{2})=\Omega(k_{1}+k_{2})[\gamma-2(k_{1}+k_{2})],\quad \varphi F_{\Omega}(k_{1},k_{2})=F_{\Omega}(k_{1},k_{2}),
$$

\noindent{}and $G_{\Omega}$ is defined by

\begin{align*}
G_{\Omega}(k_{1},k_{2})=&
2\gamma\Omega k_{2}+b_{4}k_{1}(2k_{1}-\gamma)[\Omega^{2}-(2\eta k_{2})^{2}]^{\frac{1}{2}} \\
&+k_{2}[b_{3}(k_{1}+k_{2}-\gamma)-b_{1}(3k_{1}+k_{2}-2\gamma)][\Omega^{2}-(2\eta k_{1})^{2}]^{\frac{1}{2}}.
\end{align*}

\noindent{}Then $G_{\Omega}$ satisfies

$$
G_{\Omega}(k_{1},k_{2})=\varphi G_{\Omega}(k_{1},k_{2})=-F_{\Omega}(k_{1},k_{2})\quad(\text{since } E_{\Omega}\equiv0)
$$

\noindent{}and

\begin{align*}
\varphi^{n}G_{\Omega}(k_{1},k_{2})=        G_{\Omega}(k_{1},k_{2})\quad\text{for}&\quad n=0,2,4,\ldots, \\
\varphi^{n}G_{\Omega}(k_{1},k_{2})=\varphi G_{\Omega}(k_{1},k_{2})\quad\text{for}&\quad n=1,3,5,\ldots
\end{align*}

\noindent{}Then $(\varphi-I)G_{\Omega}=0$ yields

\begin{align}
(\varphi-I)G_{\Omega}(k_{1},k_{2})=& 2\gamma\Omega(k_{1}-k_{2})+k_{2}[b_{1}(3k_{1}+k_{2}-2\gamma)-b_{3}(k_{1}-k_{2})] \nonumber \\
&\times [\Omega^{2}-(2\eta k_{1})^{2}]^{\frac{1}{2}}-k_{1}[b_{2}(3k_{2}+k_{1}-2\gamma) \nonumber \\
&+b_{4}(k_{1}-k_{2})][\Omega^{2}-(2\eta k_{2})^{2}]^{\frac{1}{2}}=0. \label{eq:solvek2}
\end{align}

\noindent{}Equation (\ref{eq:solvek2}) shows that, depending on $16$ distributions in Tab.~\ref{tab:1}, four distinct classes can be 
considered.

\begin{subequations}\label{eq:distr4}
\begin{align}
&(I):\:E_{\Omega}^{(1)}(k_{1},k_{2})\equiv 0,\;\text{with}\;\; E_{\Omega}^{(1)}(k_{1},k_{2})
=\gamma\Omega(k_{1}-k_{2}) \nonumber \\
&+(k_{1}+k_{2}-\gamma)\Bigl(b_{1}k_{2}[\Omega^{2}-(2\eta k_{1})^{2}]^{\frac{1}{2}}
-b_{2}k_{1}[\Omega^{2}-(2\eta k_{2})^{2}]^{\frac{1}{2}}\Bigr) \label{eq:distr4-a} \\
\intertext{($b_{1}=b_{3}$, $b_{2}=b_{4}$),}
&(II):\:E_{\Omega}^{(2)}(k_{1},k_{2})\equiv 0,\;\text{with}\;\; E_{\Omega}^{(2)}(k_{1},k_{2})
=\gamma\Omega(k_{1}-k_{2}) \nonumber \\
&+b_{1}k_{2}(k_{1}+k_{2}-\gamma)[\Omega^{2}-(2\eta k_{1})^{2}]^{\frac{1}{2}}
-b_{2}k_{1}(2k_{2}-\gamma)[\Omega^{2}-(2\eta k_{2})^{2}]^{\frac{1}{2}} \label{eq:distr4-b} \\
\intertext{($b_{1}=b_{3}$, $b_{2}=-b_{4}$),}
&(III):\:E_{\Omega}^{(3)}(k_{1},k_{2})\equiv 0,\;\text{with}\;\; E_{\Omega}^{(3)}(k_{1},k_{2})=-\varphi_{1}E_{\Omega}^{(2)}(k_{1},k_{2})
\label{eq:distr4-c} \\
\intertext{($b_{1}=-b_{3}$, $b_{2}=b_{4}$ and $\varphi_{1}\co k_{1}\mapsto k_{2}$, $k_{2}\mapsto k_{1}$,
$b_{1}\mapsto b_{2}$, $b_{2}\mapsto b_{1}$),} 
&(IV):\:E_{\Omega}^{(4)}(k_{1},k_{2})\equiv 0,\;\text{with}\;\; E_{\Omega}^{(4)}(k_{1},k_{2})
=\gamma\Omega(k_{1}-k_{2}) \nonumber \\
&+b_{1}k_{2}(2k_{1}-\gamma)[\Omega^{2}-(2\eta k_{1})^{2}]^{\frac{1}{2}}
-b_{2}k_{1}(2k_{2}-\gamma)[\Omega^{2}-(2\eta k_{2})^{2}]^{\frac{1}{2}} \label{eq:distr4-d}
\end{align}
\end{subequations}

\noindent{}($b_{1}=-b_{3}$, $b_{2}=-b_{4}$).

By the isomorphism in (\ref{eq:distr4-c}), it suffices to consider three classes: $(I)$, $(II)$, 
$(IV)$.

Class $(I)$. Given $\Omega>0$, the equation $E_{\Omega}^{(1)}\equiv0$ (\ref{eq:distr4-a}) holds for the distributions numbered by $N=1$, 
$6$, $11$ and $16$. If, however, $\Omega=0$, then $E_{0}^{(1)}\equiv0$ holds for all $k_{1}$, $k_{2}$, which is inconsistent with the 
point spectrum of $A$. Subsequently, class $(I)$ is improper.

Class $(II)$. For $\Omega>0$, $E_{\Omega}^{(2)}\equiv0$ (\ref{eq:distr4-b}) holds for the distributions numbered by $N=2$, $5$, $12$ and 
$15$. Due to the isomorphism $\varphi_{1}$, the number of distributions decreases to $N=2$, $3$, $5$, $8$, $9$, $12$, 
$14$ and $15$. But $E_{0}^{(2)}\equiv0$ yields $k_{1}(k_{1}+k_{2}-2\gamma)+k_{2}(k_{1}-3k_{2}+2\gamma)=0$ which is satisfied only
for $k_{1}=k_{2}=\gamma/2$, hence improper due to $\lambda\neq\lambda_{0}$.

Class $(IV)$. For $\Omega>0$, $E_{\Omega}^{(4)}\equiv0$ (\ref{eq:distr4-d}) holds for the distributions numbered by $N=4$, $7$, $10$ and 
$13$. For $\Omega=0$, $E_{0}^{(4)}\equiv0$ yields a correct relation $k_{1}+k_{2}=\gamma$. Possible distributions are numbered by
$N=7$ and $N=10$. 

As a result, we have found that $E_{\Omega}^{(4)}\equiv0$ is the only one correct equation which holds for all $\Omega\geq0$.
The associated distributions in Tab.~\ref{tab:1} are numbered by $N=7$ and $10$.

By solving (\ref{eq:distr4-d}), we find that

\begin{equation}
k_{1}+k_{2}=\gamma(1+\chi_{\Omega}),
\label{eq:maineq}
\end{equation}

\noindent{}where

\begin{equation}
\chi_{\Omega}=\Omega\cdot\frac{-\gamma^{2}\Omega+2k_{1}k_{2}\bigl(\Omega\pm
[\Omega^{2}-(\gamma\eta)^{2}+(2\eta)^{2}k_{1}k_{2}]^{\frac{1}{2}}\bigr) }{2[(2\eta k_{1}k_{2})^{2}+(\gamma\Omega)^{2}]}
\label{eq:maineq2}
\end{equation}

\noindent{}($\Omega,\eta\geq0$),
$\chi_{0}=0$ and $\gamma<0$. As it should be by (\ref{eq:solvek2}), equation (\ref{eq:maineq}) is invariant under the 
action of $\varphi$ as well as $\varphi_{1}$. 

Recalling that $k_{1}k_{2}=s^{\prime}[\lambda^{2}-(\Omega/2)^{2}]^{1/2}$ ($s^{\prime}=\pm1$), one can construct the equation for the 
eigenvalues $\lambda$. By (\ref{eq:maineq}), $\lambda$ satisfies the following cubic equation

\begin{align}
& (8\eta)^{2}\lambda^{3}+16[\eta^{2}(\gamma^{2}+\eta^{2})+\Omega(\Omega\pm 4\eta^{2})]\lambda^{2} \nonumber \\
&\pm 8\Omega[2\Omega^{2}+(\gamma^{2}+2\eta^{2})(\eta^{2}\pm\Omega)]\lambda+
\Omega^{2}[4\eta^{4}+(\gamma^{2}\pm 2\Omega)^{2}]=0 \label{eq:lambdaobt}
\end{align}

\noindent{}($\Omega\geq0$), provided $\mrm{Re}\:k_{j}<0$ for $j=1,2$. Note that the sign $\pm$ corresponds to that in
(\ref{eq:maineq2}).

Now, it is necessary to show that the eigenvalues $\lambda$, which satisfy (\ref{eq:lambdaobt}), are also in $\disc(B)\backslash\so(A)$, 
thus accomplishing the proof of Theorem~\ref{thm:A}-(1), and that the eigenfunctions of $\disc(A)\backslash\so(A)$ are as in 
(\ref{eq:AAAA})--(\ref{eq:AAAA1}), thus giving Theorem~\ref{thm:A}-(3a) and (3b). 

We solve (\ref{eq:F2}) with respect to $c_{3}$, $c_{4}$ and $\tilde{c}_{3}$, $\tilde{c}_{4}$, by assuming that $\eta>0$,

\begin{align*}
c_{3}=&c_{1}\:\frac{\eta k_{1}}{k_{1}^{2}+\lambda+\Omega/2}=-c_{1}\:\frac{k_{1}^{2}+\lambda-\Omega/2}{\eta k_{1}}, \\
c_{4}=&c_{2}\:\frac{\eta k_{2}}{k_{2}^{2}+\lambda+\Omega/2}=-c_{2}\:\frac{k_{2}^{2}+\lambda-\Omega/2}{\eta k_{2}}, \\
\tilde{c}_{3}=&-\tilde{c}_{1}\:\frac{\eta k_{1}}{k_{1}^{2}+\lambda+\Omega/2}=
\tilde{c}_{1}\:\frac{k_{1}^{2}+\lambda-\Omega/2}{\eta k_{1}}, \\
 \tilde{c}_{4}=&-\tilde{c}_{2}\:\frac{\eta k_{2}}{k_{2}^{2}+\lambda+\Omega/2}=
\tilde{c}_{2}\:\frac{k_{2}^{2}+\lambda-\Omega/2}{\eta k_{2}}.
\end{align*}

\noindent{}We note that each equality in every row can be chosen arbitrarily; we choose the first one. Substitute obtained 
expressions in (\ref{eq:fxstart}) and find by (\ref{eq:F1}),

\begin{align*}
&f(0_{+})=c_{1}\left(\begin{matrix}1 \\ \frac{\eta k_{1}}{k_{1}^{2}+\lambda+\Omega/2} \end{matrix}\right)+
c_{2}\left(\begin{matrix}1 \\ \frac{\eta k_{2}}{k_{2}^{2}+\lambda+\Omega/2} \end{matrix}\right),\;\;
f(0_{-})=\tilde{c}_{1}\left(\begin{matrix}1 \\ \frac{-\eta k_{1}}{k_{1}^{2}+\lambda+\Omega/2} \end{matrix}\right)+
\tilde{c}_{2}\left(\begin{matrix}1 \\ \frac{-\eta k_{2}}{k_{2}^{2}+\lambda+\Omega/2} \end{matrix}\right), \\
&f^{\prime}(0_{+})=c_{1}\left(\begin{matrix}k_{1} \\ \frac{\eta k_{1}^{2}}{k_{1}^{2}+\lambda+\Omega/2} \end{matrix}\right)+
c_{2}\left(\begin{matrix}k_{2} \\ \frac{\eta k_{2}^{2}}{k_{2}^{2}+\lambda+\Omega/2} \end{matrix}\right),\;\;
f^{\prime}(0_{-})=\tilde{c}_{1}\left(\begin{matrix}-k_{1} \\ \frac{\eta k_{1}^{2}}{k_{1}^{2}+\lambda+\Omega/2} \end{matrix}\right)+
\tilde{c}_{2}\left(\begin{matrix}-k_{2} \\ \frac{\eta k_{2}^{2}}{k_{2}^{2}+\lambda+\Omega/2} \end{matrix}\right).
\end{align*}

\noindent{}These functions, with $f(0_{+})=f(0_{-})$, are in $D(A)$. Hence the boundary condition given by
$\gamma(f(0_{+})+f(0_{-}))/2=f^{\prime}(0_{+})-f^{\prime}(0_{-})$ yields

\begin{subequations}\label{eq:b12}
\begin{align}
0=&(c_{1}+c_{2})\left(\gamma-\frac{2k_{1}k_{2}(k_{1}+k_{2})}{k_{1}k_{2}-\lambda-\Omega/2}\right),  \\
0=&\eta(c_{1}-\tilde{c}_{1})
\left(\frac{k_{1}(2k_{1}-\gamma)}{k_{1}^{2}+\lambda+\Omega/2}-\frac{k_{2}(2k_{2}-\gamma)}{k_{2}^{2}+\lambda+\Omega/2}\right). 
\end{align}
\end{subequations}

\noindent{}By (\ref{eq:b12}), four possible cases are then considered, provided $\eta>0$:

Case (1). $c_{1}+c_{2}=0$ and $c_{1}-\tilde{c}_{1}=0$. By (\ref{eq:fxstart}), $c_{1}+c_{2}=\tilde{c}_{1}+\tilde{c}_{2}=0$. Hence
$\tilde{c}_{2}=-\tilde{c}_{1}=-c_{1}$. By (\ref{eq:F1}), $\tilde{c}_{2}(k_{1}-k_{2})=c_{1}(2k_{1}-\gamma)+
c_{2}(k_{1}+k_{2}-\gamma)$ (see also (\ref{eq:further-a})). Hence $c_{1}(k_{1}-k_{2})=0$. If $c_{1}=0$, then $f\equiv0$, hence trivial.
If $k_{1}=k_{2}$, then $\lambda=\lambda_{0}$, by (\ref{eq:kssp}), and $f\equiv0$, by (\ref{eq:fxstart}); hence improper again.

Case (2). 

$$
c_{1}+c_{2}=0\quad\text{and}\quad 
\frac{k_{1}(2k_{1}-\gamma)}{k_{1}^{2}+\lambda+\Omega/2}-\frac{k_{2}(2k_{2}-\gamma)}{k_{2}^{2}+\lambda+\Omega/2}=0
\Longrightarrow
\gamma=-\frac{2(k_{1}+k_{2})(\lambda+\Omega/2)}{k_{1}k_{2}-\lambda-\Omega/2}.
$$

\noindent{}If we expand the latter equation by using (\ref{eq:kssp}), this agrees with (\ref{eq:lambdaobt}) for the upper sign.
By noting that $k_{j}=ip_{j}$ for $j=1,2$, and $p_{j}$ as in the theorem, we find that the correspondence is one-to-one with
the eigenvalues in $\disc(B)\backslash\so(A)$ obtained by setting the lower sign.

By (\ref{eq:fxstart}), $c_{1}+c_{2}=\tilde{c}_{1}+\tilde{c}_{2}=0$, and thus $\tilde{c}_{2}=-\tilde{c}_{1}$. Then 
(\ref{eq:further-a}) yields $\tilde{c}_{1}=-c_{1}$ and $\tilde{c}_{2}=c_{1}$. The substitution of these coefficients in
(\ref{eq:fxstart}) gives (\ref{eq:AAAA1}), with $k_{j}=ip_{j}$ ($j=1,2$), $C\equiv c_{1}\in\C$.

Case (3).

$$
\gamma-\frac{2k_{1}k_{2}(k_{1}+k_{2})}{k_{1}k_{2}-\lambda-\Omega/2}=0\quad\text{and}\quad c_{1}-\tilde{c}_{1}=0.
$$

\noindent{}
Similarly to the previous case, by expanding the former equation with the help of (\ref{eq:kssp}), we establish (\ref{eq:lambdaobt})
with the lower sign. Subsequently, this corresponds to the upper sign in $\disc(B)\backslash\so(A)$.

The latter equation, $c_{1}-\tilde{c}_{1}=0$, along with (\ref{eq:fxstart}) yields

$$
\tilde{c}_{1}=c_{1},\quad\tilde{c}_{2}=c_{2}=-c_{1}\:\frac{k_{1}}{k_{2}}\cdot\frac{k_{2}^{2}+\lambda+\Omega/2}{
k_{1}^{2}+\lambda+\Omega/2 }.
$$

\noindent{}Substitute obtained coefficients in (\ref{eq:fxstart}) and get (\ref{eq:AAAA}), with $k_{j}=ip_{j}$ ($j=1,2$) and the
coefficient $C\equiv c_{1}p_{1}/(\lambda+\Omega/2-p_{1}^{2})\in\C$ (note that the denominator is nonzero unless $\lambda$ is in the 
essential spectrum).

Case (4).

$$
\gamma=\frac{2k_{1}k_{2}(k_{1}+k_{2})}{k_{1}k_{2}-\lambda-\Omega/2}\quad\text{and}\quad
\gamma=-\frac{2(k_{1}+k_{2})(\lambda+\Omega/2)}{k_{1}k_{2}-\lambda-\Omega/2}.
$$

\noindent{}The combination of both equations yields $(k_{1}+k_{2})(k_{1}k_{2}+\lambda+\Omega/2)=0$. If 
$k_{1}k_{2}+\lambda+\Omega/2=0$, then, recalling that (refer to (\ref{eq:kssp})) $k_{1}k_{2}=s^{\prime}\sqrt{\lambda^{2}-(\Omega/2)^{2}}$
($s^{\prime}=\pm1$), it holds $\lambda=-\Omega/2$, hence improper. If, however, $k_{1}+k_{2}=0$, then $\lambda=\lambda_{0}$, by 
(\ref{eq:kssp}), hence improper again.

As a result, Cases (2)--(3) accomplish the proof of items (1) and (3a)--(3b) of Theorem~\ref{thm:A}.

We now concentrate on (3c). For $\eta=0$, equation $H_{0}f=\lambda f$, $f\in D(A)$, is easy to deal with since the components $f_{1}$ and
$f_{2}$ are separated and thus can be solved independently one from another: $f_{1}^{\prime\prime}+(\lambda-\Omega/2)f_{1}=0$,
$f_{2}^{\prime\prime}+(\lambda+\Omega/2)f_{2}=0$. By substituting obtained exponents in the boundary condition we get (3c). Moreover,
the condition $\gamma<-2\sqrt{\Omega}$ is obtained from the inspection of the resolvent in Lemma~\ref{lem:ess}-(i), where one requires
$\mrm{Im}\:p_{j}>0$ for $j=1,2$. For $\eta=0$, $z<-\Omega/2$, and hence $-\gamma^{2}/4+\Omega/2<-\Omega/2$ thus yielding
$\gamma<-2\sqrt{\Omega}$. Otherwise, only one eigenvalue $-\gamma^{2}/4-\Omega/2$ remains.

In particular, this also proves that $(\disc(A)\backslash\so(A))\cap\ess(A)=\varnothing$ (see item (4) of the theorem) for $\eta=0$, since 
$J(0,\Omega)=-\Omega/2$. For arbitrary spin-orbit coupling $\eta>0$, let us examine the conditions $\mrm{Im}\:p_{j}>0$ for $j=1,2$.
It suffices to show the converse for at least one $p_{j}$.

Let $j=1$ and $0<\Omega\leq\eta^{2}$. Then $J(\eta,\Omega)=\lambda_{0}$. Assume that the eigenvalue 
$\lambda=\lambda_{0}+\nu$ for some real $\nu>0$. Then it holds $p_{1}=\sqrt{\lambda_{0}+\nu+\eta^{2}/2+\eta\sqrt{\nu}}$. 
But $\lambda_{0}+\eta^{2}/2=(\eta^{4}-\Omega^{2})/(4\eta^{2})\geq0$ for all $0<\Omega\leq\eta^{2}$. Hence $\mrm{Im}\:p_{1}=0$, which is 
invalid.

Let $j=1$ and $\Omega>\eta^{2}>0$. Then $J(\eta,\Omega)=-\Omega/2$. Let $\lambda=-\Omega/2+\nu$ for some $\nu>0$.
Then we have that $p_{1}=\sqrt{-a+\nu+\sqrt{a^{2}+\eta^{2}\nu }}$, where $a=(\Omega-\eta^{2})/2>0$ for all
$\Omega>\eta^{2}>0$. As seen, $\mrm{Im}\:p_{1}=0$ for all $0<\nu\leq\Omega$. Next, let $\nu=\Omega+\mu$ for some $\mu>0$,
and substitute $\lambda=\Omega/2+\mu$ in (\ref{eq:lambdaobt}). One gets that

\begin{align*}
0=&\eta^{4}\Omega^{2}+8\eta^{2}(\Omega+\mu)(\Omega^{2}+\eta^{2}(\Omega+2\mu))+
16(\Omega+\mu)^{2}(\eta^{4}+\Omega^{2}+2\eta^{2}(\Omega+2\mu)), \\
\intertext{for the lower sign, and that}
0=&16\eta^{2}\mu^{2}(\gamma^{2}+\eta^{2}+4\mu)+8\eta^{2}\mu\Omega(\gamma^{2}+4\mu)+
\Omega^{2}(\gamma^{2}+4\mu)^{2}
\end{align*}

\noindent{}for the upper one. It is evident that the above equations do not have real solutions for all $\mu>0$ for all $\Omega,\eta>0$
($\gamma<0$), since all the terms on the right-hand side are positive, whereas the left-hand side is zero. Therefore, $\mrm{Im}\:p_{1}=0$ 
for $\nu>\Omega$ as well. Subsequently, item (4) holds, and this accomplishes the proof of the theorem.
\end{proof}

\begin{rem}\label{rem:Fig}
In Fig.~\ref{fig:EigenvaluesA}, one finds that $\lambda_{+}$ vanishes for $\Omega\geq\eta^{2}+\gamma^{2}/4$, by substituting 
$\lambda=-\Omega/2$ in $\disc(A)\backslash\so(A)$ (Theorem~\ref{thm:A}-(1)) or in (\ref{eq:lambdaobt}) and solving the obtained equation 
with respect to $\Omega$. The suffix <<+>> indicates that the eigenvalue is found from $\disc(A)\backslash\so(A)$ with the plus sign
(or from (\ref{eq:lambdaobt}) with the minus sign). We also note that the condition $\lambda<\inf\ess(A)$ alone is insufficient to
derive proper bound states; this must be implemented with the requirement $\mrm{Im}\:p_{j}>0$ for $j=1,2$ as well.
\end{rem}

\section{Summary and discussion}\label{sec:summary}

In this paper, we solved the bound state problem for the spin-orbit coupled ultracold atom in a one-dimensional short-range potential
describing the impurity scattering. The potential is assumed to be approximated by the $\delta$-interaction. As a result, two
distinct realizations of the original differential expression, $H$, were proposed. The first one, $A$, is implemented through the boundary 
condition defining the domain of the operator. The second realization, $B$, has a meaning of distribution. 
Although both representatives provide identical spectra, the eigenfunctions differ in their form: Equivalence classes of
functions of $B$ supply with insufficient information concerning the (classical) behavior of eigenfunctions.

Based on the property that $H$ contains both the spin-orbit and the Raman coupling, we showed that, for nonzero
spin-orbit and Raman coupling, the spectrum is 
implemented with some extra states, in addition to those which are found by solving the eigenvalue equation directly. Extra
states, called the spin-orbit coupling induced states, have a peculiarity that the associated eigenfunctions are discontinuous
at the origin $x=0$, and that there might be a point embedded into the essential spectrum.
By (dis)continuity we assume that, although functions are defined on any subset of $\R\backslash\{0\}$,
their left ($x=0_{-}$) and right ($x=0_{+}$) representatives either coincide (continuity) or not (discontinuity). Such states
originate from the fact that the spin-orbit Hamiltonian is not purely Dirac-like or Schr\"{o}dinger-like operator but rather their
one-dimensional mixture. It turns out that $A$ ($B$) commutes with the operator which is unitarily equivalent to the one-dimensional
Dirac operator (in Weyl's form) for the particle with spin one-half moving in the Fermi pseudopotential $V_{F}$. In turn, we showed
that $V_{F}$ is a combination of both $\delta$- and $\delta^{\prime}$-interactions, where the latter accounts for the divergent
terms occurring if dealt with discontinuous functions (one has the so-called $\delta_{p}^{\prime}$-interaction).

Finally, we established the remaining part of the discrete spectrum of $A$ ($B$) and showed that the eigenvalues under consideration
are found by solving the cubic equation. Depending on the regime of the Raman coupling, that is to say, on the strength of the
Zeeman field, one observes either two or a single point in the spectrum. The associated eigenfunctions are everywhere continuous but
with zero-valued component (either upper or lower one) at the origin.

It is worth noting that the (self-adjoint) representatives $A_{0}$ and $A$ of the atom-light coupling $U$ and the Hamiltonian $H$
could serve for a tool to recover other self-adjoint extensions thus corresponding to modified point-interactions. This could be done 
with the help of Krein's formula \cite[Eq.~(6.10)]{Krein47} (see also \cite[Appendix~A]{Albeverio05}). For that purpose one needs to 
apply the resolvents of $A_{0}$ and $A$ given in Theorem~\ref{thm:A0}-(i) and Lemma~\ref{lem:ess}-(i), respectively. Following eg
\cite{Seba86,Albeverio98}, one constructs operators on the intervals $(-\infty,0)$ and $(0,\infty)$, and finds the orthonormal bases
relevant to deficiency subspaces. So defined, the operators have d.i. (2,2). The entries of the associated unitary matrix from $U(2)$ 
group thus determine all self-adjoint extensions.

\section*{Acknowledgments}

The authors gratefully acknowledge Dr.~G.~Juzeli\={u}nas who bears much of the credit for the genesis of the present paper. It is a  
pleasure to thank Dr.~I.~Spielman for useful discussions. R.J. acknowledges Dr.~I.~Spielman for warm hospitality extended to him during 
his visit at the University of Maryland, where the part of this work has been done. The present work was supported by the Research Council 
of Lithuania (No.~VP1-3.1-\v{S}MM-01-V-02-004).

\bibliographystyle{plainnat}

\providecommand{\noopsort}[1]{}\providecommand{\singleletter}[1]{#1}%

\end{document}